\documentclass[11pt]{article}

\usepackage[T1]{fontenc}
\usepackage[utf8]{inputenc}
\usepackage{lmodern}
\usepackage[a4paper,margin=0.90in]{geometry}
\usepackage{amsmath,amssymb,amsfonts,mathtools,bm}
\usepackage{amsthm}
\usepackage{braket}
\usepackage{microtype}
\usepackage{booktabs}
\usepackage{tabularx}
\usepackage{array}
\usepackage{makecell}
\usepackage{enumitem}
\usepackage{float}
\usepackage{xcolor}
\usepackage{graphicx}
\usepackage{tikz}
\usetikzlibrary{arrows.meta,positioning,shapes.geometric,calc,fit}
\usepackage[hidelinks]{hyperref}
\hypersetup{
  pdftitle={Ordered-Angle Coding for Exact Multiuser Unanimity Testing},
  pdfauthor={Luis Adri\'an Lizama-P\'erez},
  pdfkeywords={quantum communication, ordered-angle coding, unanimity testing, Bell-state readout, transformation-only users, serial quantum networks, calibration robustness}
}
\usepackage[nameinlink,noabbrev]{cleveref}
\usepackage[numbers,sort&compress]{natbib}

\newcolumntype{Y}{>{\raggedright\arraybackslash}X}
\newcolumntype{P}[1]{>{\raggedright\arraybackslash}p{#1}}

\newcommand{\Tr}{\operatorname{Tr}}

\newcommand{\Ztwo}{\mathbb{Z}_2}
\newcommand{\doteqph}{\mathrel{\dot{=}}}
\newcommand{\ketb}[1]{\lvert\beta_{#1}\rangle}
\newcommand{\proj}[1]{\lvert#1\rangle\!\langle#1\rvert}

\newcommand{\nuTwo}{\nu_2}

\newtheorem{theorem}{Theorem}
\newtheorem{proposition}{Proposition}
\newtheorem{corollary}{Corollary}
\newtheorem{definition}{Definition}

\title{Ordered-Angle Coding for Exact Multiuser Unanimity Testing}
\author{%
Luis Adri\'an Lizama-P\'erez\\[3pt]
\small Departamento de Sistemas de Informaci\'on y Comunicaciones,\\
\small Divisi\'on de Ciencias B\'asicas e Ingenier\'ia, Universidad Aut\'onoma Metropolitana, Unidad Lerma,\\
\small Lerma, Estado de M\'exico 52005, Mexico\\
\small \texttt{l.lizama@correo.ler.uam.mx}}
\date{September 2026}

\begin{document}
\maketitle

\begin{abstract}
We introduce ordered-angle coding for binary-unanimity testing among $n$ transformation-only users in a serial quantum architecture inherited from Loop-Back communication. User $B_i$ fixes one private sign across a logical word and applies $R(s_i\alpha_j)$ in trial $j$. If $w$ users choose the negative sign, serial composition gives $R[(n-2w)\alpha_j]$. We show that every fixed same-axis pulse that is deterministic for both unanimous inputs under the binary Bell readout has $\alpha_j=q_j\pi/(2n)$ with integer $q_j$. For a nonadaptive word $\mathbf q=(q_1,\ldots,q_m)$, a mixed Hamming weight imitates the unanimous signature with probability
\[
M_{\mathbf q}(w)=\prod_{j=1}^{m}\cos^2\!\left(\frac{\pi q_jw}{n}\right).
\]
Within this complete deterministic-unanimity pulse family, a finite perfect word exists if and only if $n$ is a power of two. For $n=2^r$, the dyadic word $(1,2,4,\ldots,2^{r-1})$ is exact and pulse-minimal with $m_{\min}=r=\log_2n$ trials. It replaces the $O(n^2\log(1/\varepsilon))$ worst-case burden of repeated smallest-angle tests by exact $O(\log n)$ verification. Odd primes instead admit balanced statistical words with uniform mixed-weight imitation $2^{-(p-1)}$.

For actual operations $R(s_i\alpha_j+\delta_{ij})$, an ideal rejecting position is lifted to $\sin^2\Delta_j$, where $\Delta_j=\sum_i\delta_{ij}$. This yields explicit robustness bounds with angular and binary-readout errors. Bell entanglement is not required for the additive algebra, but it keeps the traveler locally maximally mixed in every honest trial. We therefore present the result as a coded multiuser relation primitive with optional raw conference-key-candidate use, not as a composably secure conference-key protocol.
\end{abstract}

\noindent\textbf{Keywords:} quantum communication, ordered-angle coding, unanimity testing, Bell-state readout, transformation-only users, serial quantum networks, calibration robustness

\section{Introduction}

Quantum conference key agreement and multiparty quantum key agreement seek common secret material among several participants, while private-comparison protocols evaluate relations among distributed inputs under different trust and capability assumptions~\cite{murta2020review,lin2021mqka,mouaji2026review}. These tasks have been realized with Bell and GHZ states, single particles, interference, and device-independent resources. Their diversity makes it important to separate network architecture, quantum resource, relation functionality, and security model.

This work studies a serial quantum relation architecture inherited from Loop-Back communication. Alice prepares a quantum reference, one traveling qubit visits remote users in sequence, each user applies a private transformation, and Alice performs the final observation~\cite{lizama2025loop,lizama2025three}. Earlier Loop-Back papers called these endpoints ``passive'' because they require no protocol-carrier source, detector, measurement module, or quantum memory. Here we use the more precise term \emph{transformation-only} to avoid confusion with fully passive source architectures in contemporary QKD and CKA. The immediate predecessor established the two-user retained-half Bell geometry with same-axis operations $R(\pm\alpha)$ and an Alice-private binary Bell comparator~\cite{lizama2026twouser}.

The new question is not whether rotations add. It is whether a deliberately scheduled set of common rotation magnitudes can act as a finite relation code. If every user retains the same private sign across a word while the public magnitude changes between trials, each position probes a different Hamming-weight class of the collective input. For
\begin{equation}
\mathbf q=(q_1,\ldots,q_m),\qquad \alpha_j=\frac{q_j\pi}{2n},
\end{equation}
the complete binary record has a deterministic unanimous signature and a mixed Hamming weight $w$ imitates that signature with probability
\begin{equation}
M_{\mathbf q}(w)=\prod_{j=1}^{m}\cos^2\!\left(\frac{\pi q_jw}{n}\right).
\label{eq:intro-mimic}
\end{equation}
The angle family is not imposed arbitrarily. We prove that every fixed same-axis pulse that is deterministic for both unanimous inputs under the binary Bell instrument has exactly the form $\alpha=q\pi/(2n)$ with integer $q$. The codeword problem therefore concerns nonadaptive schedules drawn from the complete deterministic-unanimity same-axis pulse family.

Within that family, finite exact discrimination is possible precisely when $n$ is a power of two. For $n=2^r$, the dyadic word $\mathbf q=(1,2,4,\ldots,2^{r-1})$ is exact and pulse-minimal. Each position eliminates one $2$-adic valuation class of mixed Hamming weights, so $r=\log_2n$ Bell trials are necessary and sufficient. By contrast, repeated use of the smallest deterministic angle needs $O(n^2\log(1/\varepsilon))$ trials to drive the worst mixed imitation below a fixed target $\varepsilon$. User counts with odd factors form a statistical regime, with odd primes admitting balanced words that equalize residual mixed-weight ambiguity.

For a quantum-technology setting, exact algebra must also be tied to control tolerances. We therefore treat additive per-user angular offsets and binary readout errors. If the actual operation is $R(s_i\alpha_j+\delta_{ij})$, the total unwanted angular shift is $\Delta_j=\sum_i\delta_{ij}$. An ideal rejecting position is then lifted from zero to $\sin^2\Delta_j$. This yields explicit false-acceptance and true-acceptance bounds and a conservative per-user calibration requirement that tightens approximately as $1/n$ at fixed relation-error target.

Bell entanglement is deliberately separated from the codeword algebra. Same-axis rotations compose identically on a pure qubit, and pure references with $\langle Y\rangle=0$ reproduce the same ideal $\cos^2\Theta$ overlap law. Bell references are retained because the traveling subsystem is locally $I/2$ in every realized honest trial. The useful relation information is therefore stored in the correlation with Alice's retained subsystem rather than in a local Bloch direction. This is a resource-level information property, not a proof of composable secrecy against intermediate coherent access.

\subsection*{Main contributions}
\begin{enumerate}[leftmargin=2em]
\item We formulate ordered-angle coding as a multiuser relation primitive and derive the word-imitation law in equation~\eqref{eq:intro-mimic}.
\item We characterize the complete fixed same-axis pulse family that is deterministic under unanimity, then prove exactness if and only if $n=2^r$ for finite nonadaptive schedules from that family. We also prove the dyadic word pulse-minimal with $m_{\min}=r=\log_2n$.
\item We characterize the non-power-of-two regime through balanced odd-prime words and finite-budget statistical codewords.
\item We derive closed-form robustness bounds for angular calibration and binary readout error while keeping composable conference-key security and active intermediate attacks outside the claim.
\end{enumerate}

\section{Relation to prior work and scope of novelty}
\label{sec:prior}

\subsection{Multiparty key agreement and restricted-user architectures}

Bell-state, GHZ-state, single-particle, interference-based, and device-independent approaches to multiparty key establishment are established~\cite{shi2013bellmqka,liu2013singleparticles,ribeiro2018di,grasselli2019cka,zhao2020phase}. Experiments have demonstrated QCKA with multipartite entanglement and routed quantum networks~\cite{proietti2021experimental,pickston2023network}. More recent implementations address detector-side and source-side trust through measurement-device-independent and source-independent designs~\cite{yang2024mdiqcka,hua2025siqcka}. Security work also reaches finite-key coherent attacks and semi-quantum users with restricted capabilities~\cite{krawec2025cad,barreiro2025sqcka}. Fully passive CKA has been proposed in an interference-based prepare-and-measure model~\cite{li2026fullypassive}. These results establish that neither multiparty key generation nor limited endpoint capability is itself new.

The present architecture occupies a different design point. One Bell traveler visits the users serially, each remote node performs only a prescribed transformation, and Alice holds the reference and final measurement. An early Bell-network precedent already combined a retained Bell particle with local unitary operations~\cite{li2005network}. Banerjee \emph{et al.} accumulated Pauli operation sets on a Bell traveler for quantum conference functionality~\cite{banerjee2018conference}. Measurement-free mediated QKD likewise shows that removing user measurements does not by itself establish security against the mediator~\cite{zhou2024measurementfree}. Transformation-only operation is therefore an architectural constraint in this paper, not a security theorem.

\subsection{Rotation-based equality and quantum private comparison}

Quantum private comparison provides an established setting for learning equality while limiting disclosure of underlying inputs~\cite{yangwen2009,wu2023mqpc}. Rotation operations are also established in this literature. Hou and Wu use $R_y$ operations with Bell states for two-party equality testing~\cite{hou2025ry}, and a multiparty construction uses rotations before a semi-honest third party performs comparison~\cite{hou2025mqpcrotation}. Recent variants combine Bell resources with semi-quantum participants~\cite{zhou2025semibellqpc} and encode multiparty inputs on GHZ resources using local rotations~\cite{hou2025ghzrotation}. A broad claim such as ``rotation-based multiparty equality'' would therefore be untenable.

Our novelty claim is narrower. The object studied here is a position-labelled common-angle word generated by serial same-axis composition. The users keep one binary sign throughout a logical word, while the common magnitude changes between trials. This creates the product law $M_{\mathbf q}(w)$, lets different positions eliminate different Hamming-weight classes, and leads to the power-of-two exactness and pulse-minimality results. Equality or unanimity is the relation evaluated by the code, not the source of novelty.

\subsection{Hamming-weight computation with one-qubit rotations}

There is also direct prior art for Hamming-weight-dependent quantum computation with one writable qubit. Maslov \emph{et al.} showed that $n$-bit symmetric Boolean functions can be implemented exactly with a one-qubit workspace using quantum signal processing~\cite{maslov2021limitedspace}. Kawachi and Nishikubo subsequently developed private quantum signal processing in a simultaneous-message setting. Their evaluator acts on a function of the sum of parties' private angle parameters and their construction includes symmetric Boolean functions with one-qubit workspace~\cite{kawachi2025privateqsp}. These results establish that summed rotations and Hamming-weight-dependent one-qubit processing are not unique to Loop-Back.

The physical and coding models differ. The present scheme sends the same physical traveler sequentially through transformation-only users, does not use a writable evaluator or a QSP phase sequence, and uses a fresh Bell reference/complement test for every scheduled common angle. The design problem is to choose a finite set of common magnitudes that annihilates nonunanimous Hamming classes under that readout. The $2$-adic exactness and pulse-minimality results below are claims about this deterministic-unanimity Bell-readout family, not about arbitrary one-qubit algorithms for symmetric Boolean functions.

\subsection{Pulse sequences and the Loop-Back lineage}

Pulse-sequence equality tests also exist outside Loop-Back. Multi-party quantum fingerprinting uses phase-modulated coherent pulse trains and optical multiports, while more general fingerprinting networks optimize referee decisions for distributed relations~\cite{gomez2020fingerprinting,qin2021fingerprinting}. Those protocols use simultaneous-message interference. Here the physical object is a sequential Bell traveler and the logical information is encoded in a scheduled series of common rotation magnitudes.

Several carriers have also appeared earlier within Loop-Back communication~\cite{lizama2025loop,lizama2026twouser}. Table~\ref{tab:multipulse-lineage} summarizes the operational distinction. The contribution is not the use of multiple pulses. It is the replacement of repeated or filtered trials by complementary angular queries whose ordered signature constitutes the logical observable.

\begin{table}[H]
\centering
\caption{Different logical uses of multiple quantum carriers in the Loop-Back lineage.}
\label{tab:multipulse-lineage}
\small
\begin{tabularx}{\linewidth}{P{3.0cm}Y Y}
\toprule
Construction & Acceptance object & Role of additional carriers \\
\midrule
Earlier multipulse filter~\cite{lizama2025loop} & All returned BB84-basis carriers satisfy an orthogonality criterion & Redundancy used as an error filter \\
Two-user equal-angle Bell repetition~\cite{lizama2026twouser} & Repeated outcomes of one fixed angular question & Statistical amplification of one relation test \\
Ordered-angle codeword (Bell implementation) & Position-labelled signature for $\alpha_j=q_j\pi/(2n)$ & Different positions reject different mixed Hamming-weight classes \\
\bottomrule
\end{tabularx}
\end{table}

The immediate two-user predecessor supplies the retained Bell half, the $R(\pm\alpha)$ composition law, Alice's private reference, and the binary comparator~\cite{lizama2026twouser}. The present work begins when one input sign is held across several scheduled trials and the angle sequence itself is designed as a codeword.

\section{Bell-reference model and rotation convention}

Alice prepares a Bell state
\begin{equation}
\ketb r=(I\otimes P_r)\ket{\Phi^+},
\qquad
P_r=X^{r_x}Z^{r_z},
\qquad
r=(r_x,r_z)\in\Ztwo^2,
\label{eq:bellref}
\end{equation}
Modulo global phase, this convention gives
\[
P_{00}=I,\qquad P_{01}=Z,\qquad P_{10}=X,\qquad P_{11}=Y.
\]
where
\[
\ket{\Phi^+}=\frac{\ket{00}+\ket{11}}{\sqrt2}.
\]
The Bell label $r$ is sampled afresh and kept private by Alice in every trial. ``Private'' here means unavailable to the users and external observers. Alice necessarily knows the label. She retains subsystem $A$ and sends subsystem $B$ through the ordered route
\begin{equation}
A\longrightarrow B_1\longrightarrow B_2\longrightarrow\cdots
\longrightarrow B_n\longrightarrow A.
\label{eq:path}
\end{equation}
For every Bell label,
\begin{equation}
\rho_B=\Tr_A\proj{\beta_r}=\frac{I}{2}.
\label{eq:maxmixed}
\end{equation}

The native Loop-Back rotation is
\begin{equation}
R(\theta)=e^{-i\theta Y}
=
\begin{pmatrix}
\cos\theta&-\sin\theta\\
\sin\theta&\cos\theta
\end{pmatrix}
=R_y(2\theta),
\label{eq:R}
\end{equation}
so $\theta$ is the amplitude-space parameter and $2\theta$ is the conventional Bloch-sphere angle. Same-axis rotations compose additively,
\begin{equation}
R(\theta_2)R(\theta_1)=R(\theta_1+\theta_2).
\label{eq:add}
\end{equation}
For any Bell reference,
\begin{equation}
\left|\bra{\beta_r}(I\otimes R(\theta))\ket{\beta_r}\right|^2
=\frac14|\Tr R(\theta)|^2
=\cos^2\theta.
\label{eq:overlap}
\end{equation}

In the ideal valid-event model, Alice applies the binary instrument
\begin{equation}
\Pi_r=\proj{\beta_r},
\qquad
\Pi_r^\perp=I_4-\Pi_r.
\label{eq:binary}
\end{equation}
We denote the reference outcome by $N$ and the complement outcome by $C$. A physical photonic receiver has the larger record alphabet
\begin{equation}
\mathcal Y_{\rm phys}=\{N,C,\varnothing\},
\label{eq:physical-alphabet}
\end{equation}
where $\varnothing$ denotes loss or unverified two-photon occupancy. Such a trial is discarded and restarted. It is never interpreted as either binary projector. Unless stated otherwise, the probabilities below are conditioned on a valid $N/C$ classification.

\begin{figure}[H]
\centering
\resizebox{0.98\linewidth}{!}{%
\begin{tikzpicture}[
>=Latex,
node distance=1.0cm and 0.95cm,
box/.style={draw,rounded corners,minimum width=2.25cm,minimum height=0.82cm,align=center,font=\small},
user/.style={box,fill=gray!10},
alice/.style={box,fill=blue!8},
q/.style={->,thick},
c/.style={->,densely dashed}
]
\node[alice] (src) {Alice: source\\private $r_j$\\fresh $\ket{\beta_{r_j}}$};
\node[user,right=1.0cm of src] (b1) {$B_1$\\$R(s_1\alpha_j)$};
\node[user,right=0.75cm of b1] (b2) {$B_2$\\$R(s_2\alpha_j)$};
\node[right=0.55cm of b2,font=\large] (dots) {$\cdots$};
\node[user,right=0.55cm of dots] (bn) {$B_n$\\$R(s_n\alpha_j)$};
\node[alice,right=1.0cm of bn] (meas) {Alice: readout\\$\{\Pi_{r_j},\Pi_{r_j}^\perp\}$};
\draw[q] (src)--(b1);
\draw[q] (b1)--(b2);
\draw[q] (b2)--(dots);
\draw[q] (dots)--(bn);
\draw[q] (bn)--(meas);
\draw[thick] ($(src.south)+(0,-0.12)$) -- ++(0,-0.9) -| node[pos=0.57,below,font=\scriptsize] {retained subsystem} ($(meas.south)+(0,-0.12)$);
\draw[c] ($(src.north)+(0,0.12)$) -- ++(0,0.68) -| node[pos=0.57,above,font=\scriptsize] {private Bell label $r_j$} ($(meas.north)+(0,0.12)$);
\end{tikzpicture}%
}
\caption{One Bell trial of the $n$-user Loop-Back route. The rotation magnitude $\alpha_j$ may change from trial to trial, but all remote users use the same axis and the same magnitude in a given trial.}
\label{fig:architecture}
\end{figure}

\section{Why a Bell traveler? Algebraic equivalence and information geometry}
\label{sec:bell-boundary}

The ordered-angle theorem is generated by same-axis serial composition, not by entanglement alone. This distinction identifies exactly what the Bell resource contributes and prevents the algebraic result from being misrepresented as an entanglement-only effect.

\subsection{Same-axis composition is carrier-independent}

Let one traveling qubit pass through $B_1,\ldots,B_n$, with user $B_i$ applying $U_i$. The accumulated operation is
\begin{equation}
U_{\rm eff}=U_nU_{n-1}\cdots U_1.
\label{eq:generic-serial}
\end{equation}
For $U_i=R(\theta_i)=e^{-i\theta_iY}$,
\begin{equation}
U_{\rm eff}=R(\Theta),\qquad \Theta=\sum_{i=1}^n\theta_i.
\label{eq:generic-angle-sum}
\end{equation}
With $\theta_i=s_i\alpha$ and $w$ negative signs, $\Theta=(n-2w)\alpha$. The reduction from an $n$-bit pattern to its Hamming weight therefore occurs before any Bell-specific measurement is chosen.

For the Bell reference of equation~\eqref{eq:bellref},
\begin{equation}
P_N^{\rm Bell}(\Theta)=\cos^2\Theta,
\qquad
P_C^{\rm Bell}(\Theta)=\sin^2\Theta.
\label{eq:bell-rot-general}
\end{equation}
Now suppose Alice instead prepares a pure qubit $\ket\psi$ and later tests the returned state against that recorded reference. Writing $r_y=\bra\psi Y\ket\psi$, one obtains
\begin{align}
P_N^{\rm 1q}(\Theta)
&=\left|\bra\psi R(\Theta)\ket\psi\right|^2 \\
&=\cos^2\Theta+r_y^2\sin^2\Theta,
\label{eq:pure-rot-overlap}\\
P_C^{\rm 1q}(\Theta)
&=(1-r_y^2)\sin^2\Theta.
\end{align}
Hence every pure reference satisfying $r_y=0$ reproduces the Bell overlap law exactly. This includes $\ket0$, $\ket1$, $\ket+$, $\ket-$, and real-amplitude linear-polarization states on the $x$--$z$ great circle. The probability-level codeword algebra therefore has a single-qubit realization. The Bell resource is not required merely to obtain equation~\eqref{eq:generic-angle-sum} or the $\cos^2\Theta$ law.

\subsection{What Bell changes}

The physical distinction is local information geometry. For every realized Bell reference,
\begin{equation}
\rho_B=\Tr_A\proj{\beta_r}=\frac{I}{2},
\end{equation}
and after any unitary $U$ on the traveler,
\begin{equation}
U\rho_BU^\dagger=\frac{I}{2}.
\label{eq:bell-local-invariance}
\end{equation}
Thus a party that sees only the traveling subsystem at one location receives no local Bloch direction from the honest Bell carrier. Distinct collective rotations are observable only when the traveler is reunited with Alice's retained subsystem and compared with the private global reference.

A pure-state implementation is different. Even when Alice randomizes a private pure reference so that an observer-averaged ensemble equals $I/2$, each realized conditional state remains rank one. Access to the same carrier at more than one route segment can therefore reveal relative changes that are absent from a single-point density-matrix description. Bell local maximal mixing is consequently a stronger resource-level property than ensemble masking, although it still does not prove secrecy against a coherent probe that couples to the traveler and stores an ancilla across route segments.

For that reason, the protocol and exact-word theorems below remain formulated with Bell references. The pure-qubit calculation is retained only to separate the universal rotation algebra from the Bell-specific information geometry and to clarify why the pure-state construction is a useful precursor rather than the final security model.

\section{Single-pulse \texorpdfstring{$n$}{n}-user family}

Each user holds a private input sign
\begin{equation}
s_i\in\{-1,+1\}
\end{equation}
and applies
\begin{equation}
U_i=R(s_i\alpha).
\end{equation}
Let $w$ be the number of negative signs. Then
\begin{equation}
\sum_{i=1}^n s_i=n-2w
\end{equation}
and
\begin{equation}
U_{\rm eff}(w)=R[(n-2w)\alpha].
\label{eq:ueffw}
\end{equation}

\begin{proposition}[Single-pulse relation law]
For a private Bell reference and the binary instrument in equation~\eqref{eq:binary},
\begin{equation}
P(C\mid w,\alpha)=\sin^2[(n-2w)\alpha].
\label{eq:PCw}
\end{equation}
\end{proposition}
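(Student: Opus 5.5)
The plan is to reduce the statement to the Bell overlap identity \eqref{eq:overlap}, applied to the total effective rotation \eqref{eq:ueffw}. The argument has three steps.

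First I collapse the serial route. Each user applies $R(s_i\alpha)$ to subsystem $B$ only, so the joint operator after the route is $I\otimes U_nU_{n-1}\cdots U_1$. Additive composition \eqref{eq:add}, applied $n-1$ times (a trivial induction on the number of users), gives
\begin{equation*}
I\otimes R\Bigl(\sum_i s_i\alpha\Bigr)=I\otimes R[(n-2w)\alpha].
\end{equation*}
This uses $\sum_i s_i=n-2w$. The returned joint state is therefore
\begin{equation*}
\ket{\psi_{\rm out}}=(I\otimes R(\Theta))\ket{\beta_r},\qquad \Theta=(n-2w)\alpha.
\end{equation*}
It depends on the sign pattern only through $w$.

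Second I evaluate the binary instrument \eqref{eq:binary}. The reference outcome $N$ has probability
\begin{equation*}
\bra{\psi_{\rm out}}\Pi_r\ket{\psi_{\rm out}}=\left|\bra{\beta_r}(I\otimes R(\Theta))\ket{\beta_r}\right|^2.
\end{equation*}
By \eqref{eq:overlap} this equals $\cos^2\Theta$ for every label $r$. To justify \eqref{eq:overlap} independently, write $\ket{\beta_r}=(I\otimes P_r)\ket{\Phi^+}$ and use $\bra{\Phi^+}(I\otimes M)\ket{\Phi^+}=\tfrac12\Tr M$. This gives the amplitude $\tfrac12\Tr(P_r^\dagger R(\Theta)P_r)=\tfrac12\Tr R(\Theta)=\cos\Theta$, because $\Tr R(\Theta)=2\cos\Theta$ from \eqref{eq:R}. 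The complement outcome then has probability $1-\cos^2\Theta=\sin^2[(n-2w)\alpha]$, which is \eqref{eq:PCw}.

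Third I remove the dependence on Alice's private label. The value is identical for all four $r$, so averaging over Alice's random $r$ leaves it unchanged. Conditioning on a valid $N/C$ event is also harmless in the ideal model. The only step needing any care is the cyclic-trace invariance of the Pauli conjugation, since the $P_r$ could in principle alter the overlap. They do not, because trace is conjugation invariant and any global phase of $P_r$ cancels between $P_r^\dagger$ and $P_r$. No step here is a real obstacle; the result is a direct corollary of \eqref{eq:add} and \eqref{eq:overlap}.
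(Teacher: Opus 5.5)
Your proposal is correct and follows the paper's proof exactly: it collapses the route to $R[(n-2w)\alpha]$ by additive composition, applies the Bell overlap identity to get $\cos^2[(n-2w)\alpha]$ for $N$, and takes the complement for $C$. Your extra check of the overlap identity, via $\bra{\Phi^+}(I\otimes M)\ket{\Phi^+}=\tfrac12\Tr M$ and trace invariance under Pauli conjugation, correctly fills in a step the paper takes as given.
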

\begin{proof}
Apply equation~\eqref{eq:overlap} to the effective rotation in equation~\eqref{eq:ueffw}. The reference probability is $\cos^2[(n-2w)\alpha]$, the complement probability is its complement.
\end{proof}

The smallest positive angle that makes both unanimous branches deterministic and channel-identical is
\begin{equation}
\boxed{\alpha_n=\frac{\pi}{2n}}.
\label{eq:alphan}
\end{equation}
Then
\begin{equation}
U_{\rm eff}(0)=R(\pi/2)=-iY,
\qquad
U_{\rm eff}(n)=R(-\pi/2)=+iY,
\label{eq:unanchannels}
\end{equation}
so the two unanimous input values induce the same channel up to global phase. Moreover,
\begin{equation}
P(C\mid w=0,n)=1,
\label{eq:complete}
\end{equation}
whereas for a mixed weight
\begin{equation}
\boxed{
P(C\mid w)=\cos^2\left(\frac{\pi w}{n}\right),
\qquad 1\le w\le n-1.
}
\label{eq:singlemimic}
\end{equation}

For $n=2$, equation~\eqref{eq:alphan} gives $\alpha_2=\pi/4$ and an ideal deterministic equality test. This does not contradict the $\alpha=\pi/8$ operating point emphasized in the two-user predecessor~\cite{lizama2026twouser}. That paper preserves the native angle inherited from single-qubit Loop-Back and obtains a probabilistic conclusive event. Here the angle is redesigned to impose a deterministic unanimous signature. The codeword theory then asks how several such signatures separate all mixed $n$-user patterns.

Thus, conditional on a valid binary classification, the ideal single-pulse construction has perfect completeness but nonzero soundness error. The worst mixed case is $w=1$ or $n-1$:
\begin{equation}
P_{\rm max}^{(1)}=\cos^2\left(\frac{\pi}{n}\right)
=1-\frac{\pi^2}{n^2}+O(n^{-4}).
\label{eq:worstsingle}
\end{equation}
Repeating the same pulse $L$ times suppresses that worst case only as
\begin{equation}
\left[\cos^2\left(\frac{\pi}{n}\right)\right]^L
\approx
e^{-\pi^2 L/n^2},
\label{eq:quadraticrep}
\end{equation}
which requires $L=O(n^2\log(1/\varepsilon))$ trials for a fixed worst-case target $\varepsilon$.

\begin{proposition}[All deterministic-unanimity same-axis pulses have integer coefficient]
A same-axis pulse has a deterministic binary Bell result for both unanimous signs if and only if
\begin{equation}
n\alpha=\frac{q\pi}{2}
\end{equation}
for some integer $q$. Equivalently,
\begin{equation}
\alpha=\frac{q\pi}{2n}.
\end{equation}
For every such pulse the two unanimous effective operations are equal up to global phase.
\end{proposition}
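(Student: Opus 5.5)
The plan is to read the statement entirely through the single-pulse relation law, equation~\eqref{eq:PCw}, specialized to the two unanimous Hamming weights. For $w=0$ the effective operation is $R(n\alpha)$. For $w=n$ it is $R(-n\alpha)$. Hence
\[
P(C\mid w=0,\alpha)=P(C\mid w=n,\alpha)=\sin^2(n\alpha).
\]
So both unanimous branches share one outcome distribution. The binary result is deterministic for either branch exactly when $\sin^2(n\alpha)\in\{0,1\}$, and the two branches then agree automatically.

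The key step is to solve $\sin^2(n\alpha)\in\{0,1\}$. I would write this as $\sin^2(n\alpha)\cos^2(n\alpha)=0$, that is $\sin^2(2n\alpha)=0$. This holds if and only if $2n\alpha\in\pi\mathbb Z$, which is $n\alpha=q\pi/2$ for some integer $q$. Both directions come out of this chain of equivalences at once. Even $q$ gives a deterministic $N$ outcome and odd $q$ gives a deterministic $C$ outcome. Dividing by $n$ gives the equivalent form $\alpha=q\pi/(2n)$.

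For the final clause I would compare the operators directly. We have $R(n\alpha)=R(q\pi/2)=\cos(q\pi/2)I-i\sin(q\pi/2)Y$, and $R(-n\alpha)$ has the opposite sign on the $Y$ term. For even $q$ these are $\pm I$ and $\pm I$, which are equal because both signs equal $\cos(q\pi/2)=(-1)^{q/2}$. For odd $q$ they are $\mp iY$ and $\pm iY$, which differ by the global phase $-1$. In both cases $R(n\alpha)=\pm R(-n\alpha)$, matching equation~\eqref{eq:unanchannels} in the case $q=1$.

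The only point needing care is interpretation rather than computation. ``Deterministic'' should be read under the ideal valid-event instrument of equation~\eqref{eq:binary}, as a statement about outcome probabilities. The overlap law in equation~\eqref{eq:overlap} is independent of the private Bell label $r$. Determinism therefore cannot hinge on which reference Alice chose, and no averaging over $r$ is needed. With that noted, I expect no real obstacle.
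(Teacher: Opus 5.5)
Your proposal is correct and follows essentially the same route as the paper. In both, equation~\eqref{eq:PCw} is specialized to unanimity to give $P(C)=\sin^2(n\alpha)$, this is required to lie in $\{0,1\}$, which is equivalent to $n\alpha=q\pi/2$ with $q$ an integer, and even and odd $q$ are then separated to show $R(\pm q\pi/2)$ agree up to a sign. Your rewriting via $\sin^2(2n\alpha)=0$ and the explicit expansion $R(\theta)=\cos\theta\,I-i\sin\theta\,Y$ only spell out the paper's one-line steps.
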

\begin{proof}
Under unanimity, equation~\eqref{eq:PCw} gives $P(C)=\sin^2(n\alpha)$. A deterministic binary result requires this probability to be either $0$ or $1$, which occurs exactly when $n\alpha=q\pi/2$ for an integer $q$. For even $q$, $R(\pm q\pi/2)$ are the same multiple of $I$, for odd $q$, they are opposite global phases multiplying $Y$.
\end{proof}

This proposition shows that the integer coefficients used below do not represent an arbitrary discretization. They parameterize the complete set of same-axis pulses that preserve a deterministic unanimous signature under the binary Bell instrument.

\section{Two distinct uses of multiple Bell trials}
\label{sec:two-multipulse}

The central conceptual distinction is shown in Figure~\ref{fig:rep-vs-code}. In both modes, every trial uses a fresh Bell pair and every user keeps the same private input sign over the logical block. What changes is the \emph{angle schedule}.

\begin{definition}[Angle schedule, repetition block, and ordered-angle codeword]
Let
\begin{equation}
\mathbf q=(q_1,\ldots,q_m),\qquad
\alpha_j=\frac{q_j\pi}{2n}.
\label{eq:wordangles}
\end{equation}
The integer sequence $\mathbf q$ is the \emph{angle schedule}. If $q_1=\cdots=q_m=q$, the block is a \emph{same-angle repetition block}. If the coefficients are deliberately selected so that different positions reject different mixed Hamming-weight classes, the schedule is an \emph{ordered-angle codeword}. The expected unanimous binary outcomes form the associated \emph{unanimity signature}.
\end{definition}

The word ``ordered'' has an operational meaning. Trial $j$ has a prescribed angle and therefore a prescribed expected unanimous output. Exchanging two positions does not change the ideal product probability when all trials are eventually performed, but it can change the \emph{early-abort cost}, because a highly discriminating angle may reject a mixed input earlier. Thus the multiset of angles determines ideal full-word discrimination, while their order can matter for resource consumption.

\begin{figure}[H]
\centering
\resizebox{0.98\linewidth}{!}{%
\begin{tikzpicture}[
>=Latex,
box/.style={draw,rounded corners,minimum width=1.45cm,minimum height=0.72cm,align=center,font=\small},
lab/.style={font=\small,align=right},
arr/.style={->,thick}
]
\node[lab] (ra) at (0,1.2) {same-angle\\repetition};
\node[box] (r1) at (2.1,1.2) {$q=1$\\$\alpha$};
\node[box] (r2) at (4.0,1.2) {$q=1$\\$\alpha$};
\node[box] (r3) at (5.9,1.2) {$q=1$\\$\alpha$};
\node[box] (rout) at (8.15,1.2) {$C,C,C$\\same test};
\draw[arr] (r1)--(r2); \draw[arr] (r2)--(r3); \draw[arr] (r3)--(rout);

\node[lab] (ca) at (0,-0.2) {ordered-angle\\codeword};
\node[box] (c1) at (2.1,-0.2) {$q_1=1$\\$\alpha_1$};
\node[box] (c2) at (4.0,-0.2) {$q_2=2$\\$\alpha_2$};
\node[box] (c3) at (5.9,-0.2) {$q_3=4$\\$\alpha_3$};
\node[box] (cout) at (8.15,-0.2) {$C,N,N$\\new queries};
\draw[arr] (c1)--(c2); \draw[arr] (c2)--(c3); \draw[arr] (c3)--(cout);

\node[font=\scriptsize,align=center] at (4.05,-1.15) {same input signs $s_i$ throughout both logical blocks,\\ fresh Bell pair in every trial};
\end{tikzpicture}%
}
\caption{Repetition and coding use several Bell trials differently. Repetition raises the same likelihood to a higher power. An ordered-angle codeword changes the common angle so that different trials interrogate different collective Hamming-weight classes. The $(1,2,4)$ example corresponds to the exact eight-user codeword.}
\label{fig:rep-vs-code}
\end{figure}

\subsection{Mode A: same-angle repetition}

The simplest block repeats $q=1$:
\begin{equation}
\mathbf q_{\rm rep}^{(L)}=(\underbrace{1,1,\ldots,1}_{L\text{ times}}),
\qquad
\alpha_j=\frac{\pi}{2n}.
\label{eq:repword}
\end{equation}
For a fixed mixed weight $w$, every trial has the same imitation probability
\[
\mu_1(w)=\cos^2\left(\frac{\pi w}{n}\right),
\]
so the full block gives
\begin{equation}
M_{\rm rep}^{(L)}(w)
=\cos^{2L}\left(\frac{\pi w}{n}\right).
\label{eq:rep-mimic}
\end{equation}
Nothing new is measured in the second, third, or later trial. Confidence increases only because independent evidence is accumulated.

For $n=3$, the prime-family construction below contains only $q=1$, so ordinary repetition is already the natural statistical amplifier. For $n\geq4$, angle diversity can ask genuinely different collective questions at the same pulse budget.

\subsection{Mode B: angle-diverse codewords}

Changing the angle can produce strictly stronger discrimination than repeating one setting. The shortest example occurs at $n=4$. Two $q=1$ trials leave a worst mixed imitation probability of $1/4$, while the two-position word $\mathbf q=(1,2)$ uses the signature $(C,N)$ and rejects every mixed Hamming weight in the ideal model. The complete four-user calculation is given in section~\ref{sec:exact-powers}. This fixed-budget contrast is the basic reason to treat angular diversity as code structure rather than as repetition.

\section{General ordered-angle codeword theory}

The definitions in Section~\ref{sec:two-multipulse} now admit a general likelihood analysis. For the angle schedule
\[
\alpha_j=\frac{q_j\pi}{2n},
\]
unanimity produces the deterministic signature
\begin{equation}
y_j^{\rm U}=
\begin{cases}
C,&q_j\text{ odd},\\
N,&q_j\text{ even},
\end{cases}
\label{eq:signature}
\end{equation}
while a mixed Hamming weight $w$ matches that one position with probability
\begin{equation}
\mu_j(w)=\cos^2\left(\frac{\pi q_jw}{n}\right).
\label{eq:mimicpulse}
\end{equation}
The full-word behavior follows by multiplying the independent trial probabilities.

\begin{theorem}[Pulse-word imitation law]
Assume independent Bell trials and condition on valid binary Bell classifications. A mixed configuration of Hamming weight $w$ imitates the entire unanimous signature of word $\mathbf q$ with probability
\begin{equation}
\boxed{
M_{\mathbf q}(w)
=
\prod_{j=1}^{m}
\cos^2\left(\frac{\pi q_jw}{n}\right).
}
\label{eq:wordmimic}
\end{equation}
For true unanimity,
\begin{equation}
M_{\mathbf q}(0)=M_{\mathbf q}(n)=1.
\end{equation}
\end{theorem}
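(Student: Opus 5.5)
The proof is a per-trial calculation followed by a product over the independent trials. For fixed position $j$, with $\alpha_j=q_j\pi/(2n)$ and Hamming weight $w$, equation~\eqref{eq:ueffw} gives the effective operation $R[(n-2w)\alpha_j]$. The Single-pulse relation law then gives
\[
P(C\mid w,\alpha_j)=\sin^2\!\Big(\frac{(n-2w)q_j\pi}{2n}\Big)=\sin^2\!\Big(\frac{q_j\pi}{2}-\frac{\pi q_j w}{n}\Big).
\]
I will split on the parity of $q_j$, since the unanimous signature in equation~\eqref{eq:signature} does.

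\emph{Odd $q_j$.} Here $\sin(q_j\pi/2-x)=\pm\cos x$, so $P(C\mid w)=\cos^2(\pi q_jw/n)$. The expected unanimous output is $C$, so the matching probability is $\mu_j(w)=\cos^2(\pi q_jw/n)$.

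\emph{Even $q_j$.} Here $\sin(q_j\pi/2-x)=\pm\sin x$, so $P(N\mid w)=1-\sin^2(\pi q_jw/n)=\cos^2(\pi q_jw/n)$. The expected unanimous output is $N$, so again $\mu_j(w)=\cos^2(\pi q_jw/n)$. This confirms equation~\eqref{eq:mimicpulse} in both cases.

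For the full word, each trial uses a fresh Bell pair with a freshly sampled private label. By equation~\eqref{eq:overlap} the outcome law does not depend on $r_j$. The users' signs are fixed across the word, so $w$ is the same in every trial. Under the stated independence assumption, and conditioning on valid $N/C$ classifications trial by trial, the probability that the whole record equals $(y^{\rm U}_1,\dots,y^{\rm U}_m)$ is the product of the $\mu_j(w)$. That product is equation~\eqref{eq:wordmimic}.

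For unanimity, substituting $w=0$ gives $\cos^2 0=1$ in every factor. Substituting $w=n$ gives $\cos^2(\pi q_j)=1$ since $q_j$ is an integer. This is consistent with the preceding proposition: both unanimous channels agree up to global phase and are deterministic.

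The only delicate step is the parity bookkeeping that aligns the deterministic unanimous symbol with the correct projector. The two cases must collapse to the same $\cos^2$ expression despite the sign ambiguities from $\sin(q\pi/2-x)$. Squaring removes those signs, so I expect no real obstacle. Besides that, the write-up should state explicitly that conditioning on valid events preserves independence: discarded $\varnothing$ trials are restarted with fresh resources and do not bias the $N/C$ law.
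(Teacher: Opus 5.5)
Your proposal is correct and follows the paper's own route: the paper (in its appendix derivation) writes $P(C\mid w,q_j)=\sin^2\!\left(q_j\pi/2-\pi q_jw/n\right)$, splits on the parity of $q_j$ to match the unanimous signature ($C$ for odd, $N$ for even), obtains the common factor $\cos^2(\pi q_jw/n)$, and multiplies over the independent trials. Your explicit checks at $w=0$ and $w=n$ and your remark on conditioning on valid $N/C$ events are consistent additions that the paper leaves implicit.
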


The design problem can therefore be written as
\begin{equation}
\min_{\mathbf q}
\max_{1\le w\le n-1}
M_{\mathbf q}(w),
\label{eq:minimax}
\end{equation}
subject to a chosen pulse budget $m$ and implementation constraints.

\subsection{Aggregate acceptance and accepted-block mismatch}

If the $n$ input bits are independently uniform, Hamming weight $w$ occurs with probability
\[
2^{-n}\binom nw.
\]
A word is accepted only if its full binary record equals the unanimous signature. Define
\begin{equation}
S_{\mathbf q}
=
\sum_{w=1}^{n-1}
\binom nw M_{\mathbf q}(w).
\label{eq:Sq}
\end{equation}
Then
\begin{equation}
\boxed{
P_{\rm acc}(\mathbf q)
=
2^{-n}\left(2+S_{\mathbf q}\right),
}
\label{eq:paccq}
\end{equation}
and the intrinsic fraction of accepted blocks that were actually mixed is
\begin{equation}
\boxed{
\epsilon(\mathbf q)
=
\frac{S_{\mathbf q}}{2+S_{\mathbf q}}.
}
\label{eq:epsq}
\end{equation}

These equations separate two quantities that should not be conflated. $M_{\mathbf q}(w)$ is a relation-discrimination probability for a fixed mixed configuration. $\epsilon(\mathbf q)$ also includes the much larger number of mixed input strings in the prior distribution.

\section{Exact and pulse-minimal family for powers of two}
\label{sec:exact-powers}

The most striking case occurs when the number of users is a power of two.

\begin{theorem}[Existence of a perfect finite word]
For finite nonadaptive schedules drawn from the complete deterministic-unanimity same-axis pulse family characterized above, a word satisfying
\begin{equation}
M_{\mathbf q}(w)=0
\qquad
\text{for every }1\le w\le n-1
\label{eq:perfect}
\end{equation}
exists if and only if
\begin{equation}
\boxed{n=2^r}
\end{equation}
for some integer $r\ge1$.
\end{theorem}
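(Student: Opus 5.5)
The plan is to reduce perfection to a divisibility condition on each factor of the product in the Pulse-word imitation law. A factor $\cos^2(\pi q_j w/n)$ vanishes iff $\pi q_j w/n \equiv \pi/2 \pmod{\pi}$, i.e.\ iff $2q_jw/n$ is an odd integer. Equivalently, $n \mid 2q_jw$ and $2q_jw/n$ is odd. Since a finite product vanishes iff some factor vanishes, condition~\eqref{eq:perfect} holds iff for every mixed weight $1\le w\le n-1$ there is a position $j$ with $2q_jw/n$ an odd integer. I will use the $2$-adic valuation $\nu_2$ throughout, writing $n=2^r t$ with $t$ odd.

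\emph{Necessity.} Suppose $t>1$, i.e.\ $n$ has an odd factor. I will exhibit a mixed weight that no pulse can reject. First try $w=2^r$. This is a valid mixed weight because $1\le 2^r\le n-1$ when $t\ge3$. Then $2q_jw/n = 2q_j/t$, and for this to be an odd integer we need $t\mid 2q_j$, hence $t\mid q_j$, so $2q_j/t$ is even. Therefore no factor vanishes. In fact each factor equals $\cos^2(\pi q_j/t)$, which is nonzero because $q_j/t$ is never a half-integer. Hence $M_{\mathbf q}(2^r)>0$ for every finite word, and a perfect word can exist only if $t=1$. The case $n=1$ is excluded since there are no mixed weights; the statement requires $r\ge1$.

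\emph{Sufficiency.} For $n=2^r$, I will take the dyadic word $q_j=2^{j-1}$, $j=1,\dots,r$. Given $1\le w\le n-1$, set $k=\nu_2(w)$, which satisfies $0\le k\le r-1$. Choose $j=r-k$, so that $q_j=2^{r-k-1}$. Then $2q_jw/n = w/2^k$, which is odd. So the $j$-th factor vanishes, giving $M_{\mathbf q}(w)=0$. This is exactly the valuation-class elimination described in the introduction.

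The main obstacle is the necessity direction. The difficulty is choosing a single witness weight that defeats \emph{all} integer coefficients at once, so that the argument does not depend on the word. The weight $w=2^r$, which makes $2w/n=2/t$, does this uniformly. The remaining steps are routine checks: that this weight is a genuine mixed weight, and that the vanishing criterion is stated correctly including its parity part.
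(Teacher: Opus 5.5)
Your proof is correct and follows the paper's argument: for necessity you use the same witness weight $w=2^r$ with the same parity obstruction (your step $t\mid 2q_j \Rightarrow 2q_j/t$ even is the paper's contradiction $2q=d(2k+1)$), and for sufficiency you use the same dyadic word with $q=2^{r-\nu_2(w)-1}$ annihilating each $2$-adic valuation class. Your criterion that ``$2q_jw/n$ is an odd integer'' is the same condition as the paper's requirement $qw/n\in\mathbb{Z}+\tfrac12$, just stated in integer form.
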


\begin{proof}
First let $n=2^r$. Choose
\begin{equation}
\mathbf q_{2^r}=(1,2,4,\ldots,2^{r-1}).
\label{eq:powerword}
\end{equation}
For any mixed weight $w$, write
\[
w=2^v u,
\]
where $u$ is odd and $0\le v\le r-1$. The word contains
\[
q=2^{r-v-1}.
\]
For that trial,
\[
\frac{qw}{n}
=
\frac{2^{r-v-1}2^vu}{2^r}
=
\frac{u}{2},
\]
so
\[
\cos^2\left(\frac{\pi qw}{n}\right)
=
\cos^2\left(\frac{u\pi}{2}\right)=0.
\]
Therefore every mixed weight fails at least one pulse and the word is perfect.

Conversely, write a non-power-of-two $n$ as
\[
n=2^r d,
\qquad
d>1\ \text{odd}.
\]
Consider the mixed weight
\[
w=2^r<n.
\]
For any integer $q$,
\[
\frac{qw}{n}=\frac{q}{d}.
\]
A zero mimic factor would require
\[
\cos\left(\frac{\pi q}{d}\right)=0,
\]
equivalently $2q=d(2k+1)$. The left side is even and the right side is odd because $d$ is odd, a contradiction. Thus this mixed weight has nonzero imitation probability in every allowed trial, so no finite perfect word exists.
\end{proof}

Thus, within the complete deterministic-unanimity same-axis pulse family and nonadaptive finite schedules, \emph{power-of-two user counts are exactly those for which finite exact discrimination is possible}. This is not an impossibility statement for arbitrary real-angle schedules with nondeterministic positions, adaptive measurements, or higher-dimensional resources.

\begin{theorem}[Pulse minimality for $n=2^r$]
Any perfect word for $n=2^r$ requires at least $r$ trials. The word in equation~\eqref{eq:powerword} is therefore pulse-minimal:
\begin{equation}
\boxed{m_{\min}=r=\log_2 n.}
\end{equation}
\end{theorem}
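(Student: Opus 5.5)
The plan is to show that each allowed trial can reject at most one $2$-adic valuation class of mixed Hamming weights. Since there are exactly $r$ such classes for $n=2^r$, at least $r$ trials are needed. Fix a perfect word $\mathbf q=(q_1,\ldots,q_m)$. By the pulse-word imitation law, $M_{\mathbf q}(w)=0$ holds if and only if some position $j$ has $\cos^2(\pi q_jw/n)=0$. This happens exactly when $q_jw/n$ is a half-odd integer, i.e.\ when
\[
\frac{2q_jw}{2^r}\ \text{is an odd integer}.
\]
Perfection therefore means that every $w\in\{1,\ldots,n-1\}$ is annihilated by at least one position.

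Next I characterize the set annihilated by a single coefficient $q$. If $q=0$, the factor equals $1$ for every $w$, so the position annihilates nothing. Otherwise write $q=\pm 2^a b$ with $b$ odd and $a\ge 0$, and write $w=2^v u$ with $u$ odd. Then
\[
\frac{2qw}{2^r}=\pm\,2^{a+v+1-r}\,bu .
\]
Because $bu$ is odd, this quantity is an odd integer if and only if $a+v+1-r=0$, that is, $v=r-1-a$. Hence the position with coefficient $q$ annihilates exactly the mixed weights of $2$-adic valuation $r-1-a$. If $a\ge r$, it annihilates none. Note that this depends only on the valuation of $q$ and not on its odd part or sign, which is consistent with the dyadic construction in equation~\eqref{eq:powerword}.

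Finally, the mixed weights $1\le w\le 2^r-1$ have valuations $v\in\{0,1,\ldots,r-1\}$. Each of these $r$ classes is nonempty, as witnessed by $w=2^v$. A perfect word must cover all $r$ classes, and each position covers at most one class, so $m\ge r$. Combined with the sufficiency part of the preceding existence theorem, which shows that the dyadic word of length $r$ is perfect, this gives $m_{\min}=r=\log_2 n$.

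The only delicate step is the single-pulse characterization: one must check that the annihilated set is exactly one valuation class, with no extra coincidences. The factorization above settles this, because the parity condition forces a unique exponent.
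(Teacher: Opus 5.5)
Your proof is correct and follows the paper's argument: a zero factor forces $\nu_2(q)+\nu_2(w)=r-1$, so each position eliminates at most one of the $r$ valuation classes $\nu_2(w)\in\{0,\ldots,r-1\}$, and the dyadic word from the existence theorem attains $m=r$. Your explicit treatment of $q=0$, of the sign of $q$, and of the nonemptiness of each class (via $w=2^v$) fills in details the paper leaves implicit.
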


\begin{proof}
For $n=2^r$, a zero factor in a trial with integer $q$ requires
\[
\frac{qw}{2^r}=k+\frac12,
\]
hence
\[
\nuTwo(q)+\nuTwo(w)=r-1.
\]
A fixed $q$ can therefore eliminate mixed weights from only one $2$-adic valuation class $\nuTwo(w)$. The mixed weights $1,\ldots,2^r-1$ contain all $r$ classes
\[
\nuTwo(w)=0,1,\ldots,r-1.
\]
At least $r$ pulses are necessary. Equation~\eqref{eq:powerword} uses exactly one pulse for each class.
\end{proof}

\subsection{Worked example: four users}

For $n=4$, the minimal exact word is
\begin{equation}
\mathbf q=(1,2),
\qquad
\boldsymbol{\alpha}=
\left(\frac{\pi}{8},\frac{\pi}{4}\right).
\end{equation}
The unanimous signature is
\begin{equation}
\boxed{(C,N)}.
\end{equation}
Table~\ref{tab:n4} shows the complete Hamming-weight logic.

\begin{table}[H]
\centering
\caption{Exact four-user two-pulse word. ``Match'' means that the trial reproduces the outcome expected under unanimity.}
\label{tab:n4}
\small
\begin{tabular}{ccccc}
\toprule
Weight $w$ & Multiplicity & $q=1$: $P(\text{match})$ & $q=2$: $P(\text{match})$ & Full word \\
\midrule
0 or 4 & $2$ & $1$ & $1$ & accept \\
1 or 3 & $8$ & $1/2$ & $0$ & reject with certainty \\
2 & $6$ & $0$ & $1$ & reject with certainty \\
\bottomrule
\end{tabular}
\end{table}

The result is important because a single $\pi/8$ pulse is not exact for four users. Its worst mixed case has probability $1/2$ of imitating unanimity. Repeating $\pi/8$ twice still leaves a worst-case imitation probability $1/4$. The ordered second pulse at $\pi/4$ instead annihilates the remaining odd-weight ambiguity and makes the complete two-pulse test deterministic.

The fixed two-pulse comparison can also be expressed at the accepted-block level. Under independent unbiased input bits, $\mathbf q=(1,1)$ leaves an intrinsic mixed fraction
\[
\epsilon(1,1)=\frac12,
\]
whereas $\mathbf q=(1,2)$ has $\epsilon(1,2)=0$ in the ideal algebraic model. Thus the gain is not a consequence of using two pulses rather than one. It comes from replacing the repeated query by a complementary angular query.

\subsection{Worked example: eight users}

For $n=8$,
\begin{equation}
\mathbf q=(1,2,4),
\qquad
\boldsymbol{\alpha}
=
\left(\frac{\pi}{16},\frac{\pi}{8},\frac{\pi}{4}\right),
\end{equation}
with unanimous signature
\begin{equation}
\boxed{(C,N,N)}.
\end{equation}
The rejection mechanism is most transparent through the $2$-adic valuation:
\begin{itemize}[leftmargin=2em]
\item If $w$ is odd, $q=4$ gives $qw/8=w/2$ and hence a zero mimic factor.
\item If $w=2u$ with $u$ odd, $q=2$ gives $qw/8=u/2$ and a zero factor.
\item The remaining mixed class is $w=4$, and $q=1$ gives $qw/8=1/2$.
\end{itemize}
Every mixed Hamming weight is therefore rejected with certainty in only three Bell trials.

\begin{table}[H]
\centering
\caption{Power-of-two exact words and the pulse budget required by simple $q=1$ repetition to make the \emph{worst-case} mixed imitation smaller than $1\%$.}
\label{tab:pulsegain}
\small
\begin{tabular}{ccccc}
\toprule
$n$ & Exact word $\mathbf q$ & Exact trials & Repeated $q=1$ trials for $<1\%$ & Reduction \\
\midrule
4 & $(1,2)$ & 2 & 7 & $3.5\times$ \\
8 & $(1,2,4)$ & 3 & 30 & $10\times$ \\
16 & $(1,2,4,8)$ & 4 & 119 & $29.75\times$ \\
\bottomrule
\end{tabular}
\end{table}

\begin{figure}[H]
\centering
\includegraphics[width=0.78\linewidth]{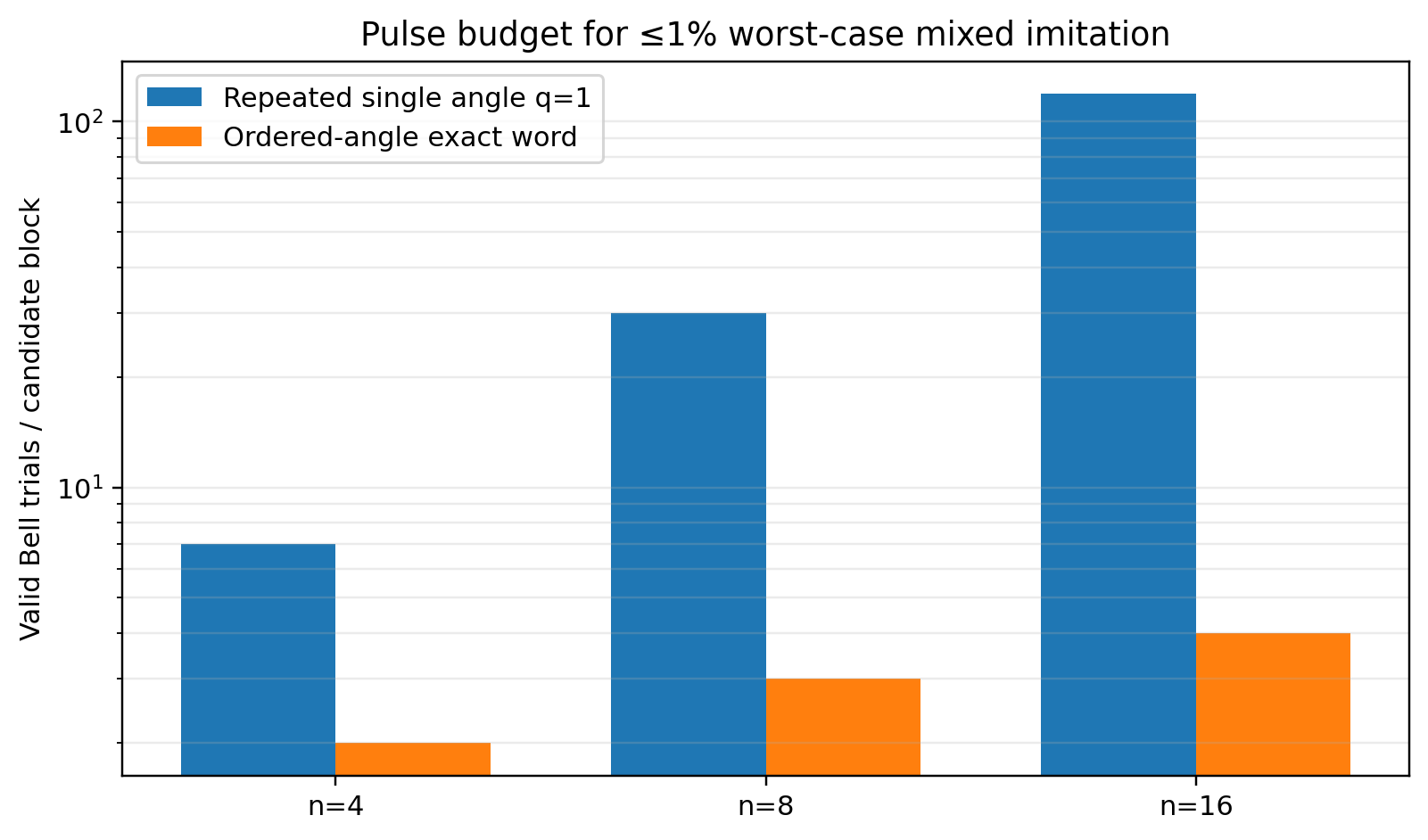}
\caption{For power-of-two user counts, ordered-angle words replace the quadratic-in-$n$ repetition burden by an exact $\log_2n$ pulse word. The comparison uses the same worst-case mixed-imitation target of $1\%$ for the repeated single-angle baseline.}
\label{fig:pulsebudget}
\end{figure}

\section{Calibration and binary-readout robustness of exact words}
\label{sec:robustness}

The exactness theorem is algebraic. A mixed Hamming weight is rejected because at least one ideal word position has zero probability of matching the unanimous signature. A physical implementation must quantify how that zero is lifted by imperfect rotations and imperfect binary classification. The following model isolates those two effects without claiming a complete optical-noise or security treatment.

\subsection{Additive per-user angular offsets}

Let the intended operation of user $i$ in trial $j$ be $R(s_i\alpha_j)$. We write the actual same-axis operation as
\begin{equation}
\widetilde U_i^{(j)}=R(s_i\alpha_j+\delta_{ij}),
\label{eq:actual-rotation}
\end{equation}
where $\delta_{ij}$ is an additive amplitude-space angle error. This model includes a sign-correlated magnitude error as the special case $\delta_{ij}=s_i\delta_{ij}^{\rm mag}$. Same-axis composition remains exact and gives
\begin{equation}
\widetilde U_{\rm eff}^{(j)}
=R\!\left[(n-2w)\alpha_j+\Delta_j\right],
\qquad
\Delta_j=\sum_{i=1}^{n}\delta_{ij}.
\label{eq:delta-sum}
\end{equation}
For Hamming weight $w$, the probability that trial $j$ reproduces the \emph{ideal} unanimous outcome is
\begin{equation}
\widetilde\mu_j(w)
=
\cos^2\!\left(\frac{\pi q_jw}{n}-\Delta_j\right).
\label{eq:perturbed-match}
\end{equation}
The sign of $\Delta_j$ is immaterial for the worst-case bounds below.

If an exact word rejects weight $w$ at position $j_*$, then $\pi q_{j_*}w/n$ is an odd multiple of $\pi/2$. Equation~\eqref{eq:perturbed-match} therefore reduces to
\begin{equation}
\boxed{\widetilde\mu_{j_*}(w)=\sin^2\Delta_{j_*}.}
\label{eq:lifted-zero}
\end{equation}
Thus the exact zero is lifted quadratically for small accumulated angular error. If
\begin{equation}
|\delta_{ij}|\leq\delta_{\max},
\end{equation}
then
\begin{equation}
|\Delta_j|\leq n\delta_{\max}
\label{eq:delta-bound}
\end{equation}
and, for $n\delta_{\max}\leq\pi/2$,
\begin{equation}
\widetilde\mu_{j_*}(w)
\leq \sin^2(n\delta_{\max})
= n^2\delta_{\max}^2+O(n^4\delta_{\max}^4).
\label{eq:angular-zero-bound}
\end{equation}
For a unanimous input, every ideal position matches with probability one and the perturbed match probability is $\cos^2\Delta_j$. Hence an $m$-position word obeys the conservative ideal-readout bound
\begin{equation}
P_{\rm true}\geq \cos^{2m}(n\delta_{\max}).
\label{eq:true-angular}
\end{equation}

\subsection{Symmetric binary classification error and contrast}

Condition on a valid two-photon event and let $\eta_j$ be the probability that the ideal $N/C$ label is flipped at trial $j$. If the pre-readout match probability is $\widetilde\mu_j$, the recorded match probability is
\begin{equation}
\overline\mu_j
=(1-\eta_j)\widetilde\mu_j+\eta_j(1-\widetilde\mu_j)
=\eta_j+(1-2\eta_j)\widetilde\mu_j.
\label{eq:readout-flip}
\end{equation}
Equivalently, with binary contrast $V_j=1-2\eta_j$,
\begin{equation}
\overline\mu_j=\frac12\left[1+V_j(2\widetilde\mu_j-1)\right].
\label{eq:contrast-form}
\end{equation}
This form connects the abstract flip model to the contrast of the inherited binary Bell comparator without assuming a particular detector implementation.

\begin{proposition}[Robustness bounds for an exact word]
Assume an ideal exact word, independent valid Bell trials, $|\delta_{ij}|\leq\delta_{\max}$, and $0\leq\eta_j\leq\eta_{\max}<1/2$. If $n\delta_{\max}\leq\pi/4$, then every mixed input satisfies
\begin{equation}
\boxed{
P_{\rm FA}
\leq
\eta_{\max}+(1-2\eta_{\max})\sin^2(n\delta_{\max}).
}
\label{eq:false-accept-bound}
\end{equation}
For a unanimous input, an $m$-position word satisfies
\begin{equation}
\boxed{
P_{\rm true}
\geq
\left[
\eta_{\max}+(1-2\eta_{\max})\cos^2(n\delta_{\max})
\right]^m.
}
\label{eq:true-accept-bound}
\end{equation}
\end{proposition}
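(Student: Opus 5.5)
The plan is to treat the two bounds separately. Both rest on the single-position formulas already derived: the perturbed match law \eqref{eq:perturbed-match} and its specialization \eqref{eq:lifted-zero} at a rejecting position, the accumulated-offset bound \eqref{eq:delta-bound}, and the readout map \eqref{eq:readout-flip}. The key structural fact is that the recorded match probability
\[
\overline\mu_j=\eta_j+(1-2\eta_j)\widetilde\mu_j
\]
is affine in $\widetilde\mu_j$. This lets us bound it by monotonicity in both arguments.

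\textbf{False acceptance.} Fix a mixed weight $w$. Since the ideal word is exact, the Existence theorem supplies a position $j_*$ at which $\pi q_{j_*}w/n$ is an odd multiple of $\pi/2$. By independence of trials, acceptance of the full word requires a match at every position, so
\[
P_{\rm FA}=\prod_j\overline\mu_j(w)\le\overline\mu_{j_*}(w).
\]
Here we use that every factor lies in $[0,1]$, since $\overline\mu_j$ is a convex combination of $\widetilde\mu_j$ and $1-\widetilde\mu_j$.

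At $j_*$, equation~\eqref{eq:lifted-zero} gives $\widetilde\mu_{j_*}=\sin^2\Delta_{j_*}$. Because $|\Delta_{j_*}|\le n\delta_{\max}\le\pi/4<\pi/2$ and $\sin^2$ is increasing in $|x|$ on $[0,\pi/2]$, we get $\widetilde\mu_{j_*}\le\sin^2(n\delta_{\max})$.

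It remains to handle the dependence on $\eta$. View $\overline\mu=\eta+(1-2\eta)\widetilde\mu=\widetilde\mu+\eta(1-2\widetilde\mu)$ as a function of $\eta$. Its derivative in $\eta$ is $1-2\widetilde\mu$, which is nonnegative because $\widetilde\mu\le\sin^2(\pi/4)=1/2$. This is exactly where the hypothesis $n\delta_{\max}\le\pi/4$ (rather than $\pi/2$) is needed. So $\overline\mu$ is nondecreasing in $\eta$, and replacing $\eta_{j_*}$ by $\eta_{\max}$ is legitimate. Its derivative in $\widetilde\mu$ is $1-2\eta\ge0$, so replacing $\widetilde\mu_{j_*}$ by $\sin^2(n\delta_{\max})$ is also legitimate. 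Together these give \eqref{eq:false-accept-bound}.

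\textbf{True acceptance.} For a unanimous input, every ideal factor equals one, so $\widetilde\mu_j=\cos^2\Delta_j\ge\cos^2(n\delta_{\max})\ge1/2$, again using $|\Delta_j|\le n\delta_{\max}\le\pi/4$. Now the $\eta$-derivative $1-2\widetilde\mu_j$ is nonpositive, so $\overline\mu_j$ is nonincreasing in $\eta$. Monotonicity in $\widetilde\mu$ with slope $1-2\eta\ge0$ gives
\[
\overline\mu_j\ge\eta_{\max}+(1-2\eta_{\max})\cos^2(n\delta_{\max}).
\]
Multiplying the $m$ independent nonnegative factors yields \eqref{eq:true-accept-bound}.

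The only delicate step is the sign analysis in $\eta$: the worst-case readout error enters with opposite monotonicity in the two cases. It is the $\pi/4$ threshold, which places $\widetilde\mu$ on the correct side of $1/2$ in each case, that makes substituting $\eta_{\max}$ valid. I would state this explicitly rather than treat it as routine.
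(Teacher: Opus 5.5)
Your proof is correct and follows the paper's argument essentially step for step. You bound the false acceptance by the single ideally rejecting factor, with all other factors at most one. You use $n\delta_{\max}\le\pi/4$ to place $\widetilde\mu_j$ at or below $1/2$ at the rejecting position and at or above $1/2$ under unanimity, so the affine readout map has the needed monotonicity in $\eta_j$, and then you multiply the $m$ independent unanimous factors.

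One small quibble: the rejecting position $j_*$ follows directly from $M_{\mathbf q}(w)=0$ being a finite product of $\cos^2$ factors, which is the definition of an exact word. It does not come from the Existence theorem, which only concerns for which $n$ such words exist.
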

\begin{proof}
Every mixed input to an exact word has at least one ideal rejecting position. Equations~\eqref{eq:lifted-zero} and~\eqref{eq:delta-bound} bound the pre-readout match at that position by $\sin^2(n\delta_{\max})$. Equation~\eqref{eq:readout-flip} is increasing in both $\widetilde\mu_j$ and $\eta_j$ when $\widetilde\mu_j\leq1/2$, which follows from $n\delta_{\max}\leq\pi/4$. All other position-match probabilities are at most one, proving equation~\eqref{eq:false-accept-bound}. Under unanimity, $\widetilde\mu_j=\cos^2\Delta_j\geq\cos^2(n\delta_{\max})\geq1/2$. In this range equation~\eqref{eq:readout-flip} decreases with $\eta_j$, so the worst case is $\eta_j=\eta_{\max}$. Multiplying the $m$ independent valid-trial bounds proves equation~\eqref{eq:true-accept-bound}.
\end{proof}

For small $\eta_{\max}$ and $\delta_{\max}$, the leading behavior is
\begin{align}
P_{\rm FA}&\lesssim \eta_{\max}+n^2\delta_{\max}^2,\\
P_{\rm true}&\gtrsim 1-m\eta_{\max}-mn^2\delta_{\max}^2.
\end{align}
The bounds are intentionally conservative because they allow all user offsets in one trial to add coherently.

\subsection{Numerical tolerance scale}

With ideal binary readout, imposing the conservative target $\sin^2(n\delta_{\max})<1\%$ gives
\begin{equation}
\delta_{\max}<\frac{\arcsin(0.1)}{n}.
\label{eq:delta-1pct}
\end{equation}
Table~\ref{tab:calibration} lists the corresponding scale. The conventional Bloch-sphere rotation angle is $2\theta$, so its tolerance is twice the amplitude-space value shown in the third column.

\begin{table}[H]
\centering
\caption{Per-user calibration bound from the conservative condition $\sin^2(n\delta_{\max})<1\%$ with ideal binary readout. The final column gives the conventional Bloch-sphere angle error $2\delta_{\max}$.}
\label{tab:calibration}
\small
\begin{tabular}{cccc}
\toprule
Users $n$ & exact word length $m$ & $\delta_{\max}$ in $\theta$ & Bloch-angle error $2\delta_{\max}$ \\
\midrule
4  & 2 & $1.435^\circ$ & $2.870^\circ$ \\
8  & 3 & $0.717^\circ$ & $1.435^\circ$ \\
16 & 4 & $0.359^\circ$ & $0.717^\circ$ \\
\bottomrule
\end{tabular}
\end{table}

\begin{figure}[H]
\centering
\includegraphics[width=0.78\linewidth]{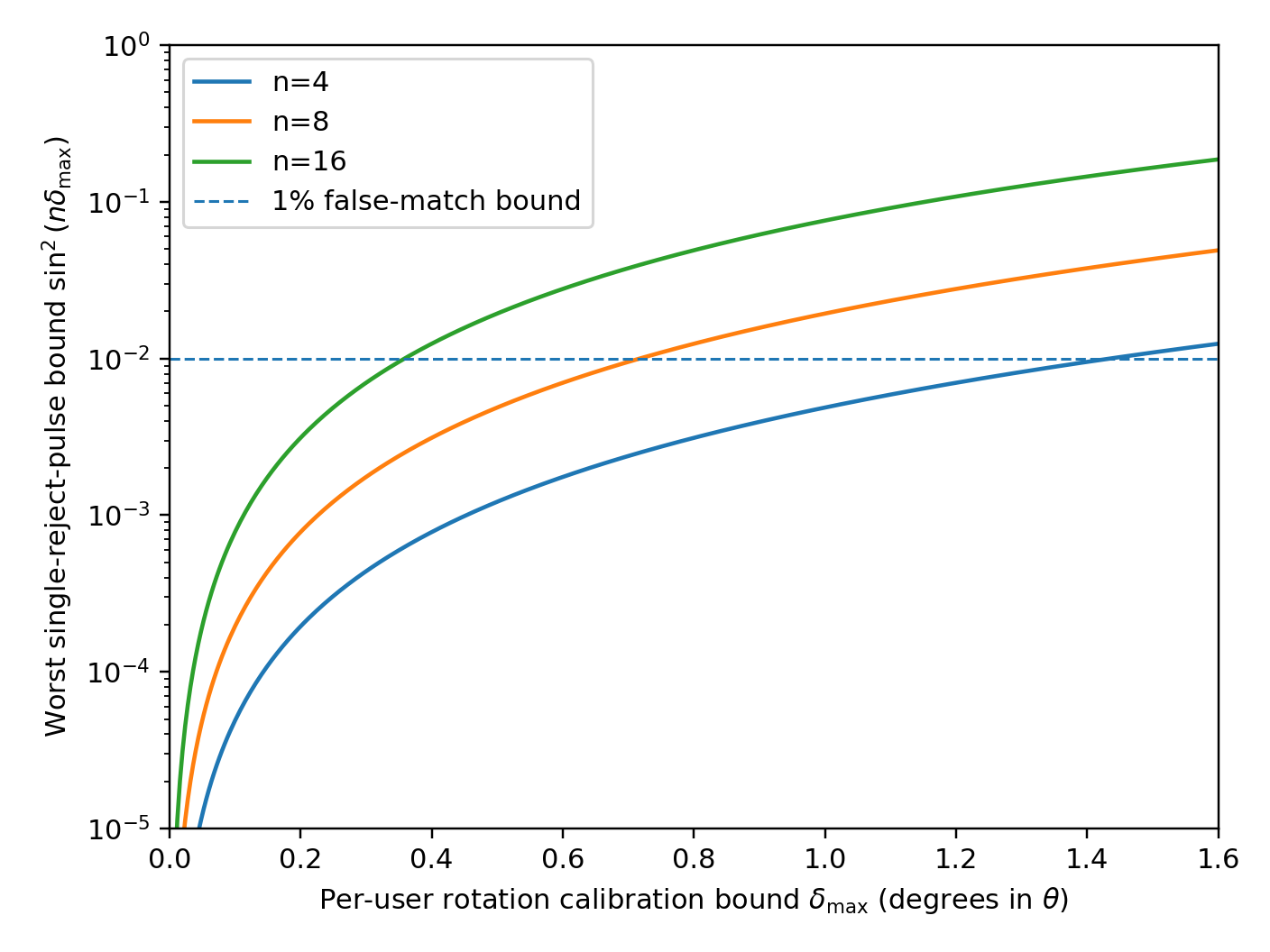}
\caption{Conservative residual-match bound $\sin^2(n\delta_{\max})$ for an ideal rejecting position under bounded additive per-user angular offsets. The curves show the $1/n$ tightening of calibration tolerance at fixed discrimination target.}
\label{fig:calibration-robustness}
\end{figure}

\section{Balanced statistical words for odd prime user counts}

If $n$ is not a power of two, the previous theorem rules out a perfect finite word within the deterministic-unanimity same-axis pulse family. Odd primes nevertheless admit a particularly symmetric statistical construction.

Let
\begin{equation}
n=p,
\qquad
p\ \text{odd prime},
\end{equation}
and choose the word
\begin{equation}
\boxed{
\mathbf q_p=
\left(1,2,\ldots,\frac{p-1}{2}\right).
}
\label{eq:primeword}
\end{equation}

\begin{theorem}[Balanced odd-prime word]\label{thm:primeword}
For every mixed weight $w=1,\ldots,p-1$,
\begin{equation}
\boxed{
M_{\mathbf q_p}(w)=2^{-(p-1)}.
}
\label{eq:primeuniform}
\end{equation}
Thus every nonunanimous Hamming weight has exactly the same probability of imitating the full unanimous signature.
\end{theorem}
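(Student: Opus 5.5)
We compute $M_{\mathbf q_p}(w)=\prod_{q=1}^{(p-1)/2}\cos^2(\pi q w/p)$ directly from the pulse-word imitation law, equation~\eqref{eq:wordmimic}. The plan is to reduce the product to a classical cyclotomic identity. The reduction relies on the fact that multiplication by a mixed weight $w$ permutes the nonzero residues modulo the prime $p$. We work with the unsquared product $P(w)=\prod_{q=1}^{(p-1)/2}\cos(\pi q w/p)$ and show $|P(w)|=2^{-(p-1)/2}$; squaring then gives equation~\eqref{eq:primeuniform}.

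\textbf{Step 1 (reduction to $w=1$).} Since $1\le w\le p-1$ and $p$ is prime, $w$ is invertible modulo $p$, so $\{qw \bmod p: 1\le q\le p-1\}$ is a permutation of $\{1,\ldots,p-1\}$. The function $|\cos(\pi k/p)|$ depends only on $k$ modulo $p$, because shifting $k$ by $p$ changes the sign of the cosine. It also depends only on $k$ up to sign, so it is a function of the class $\{\pm k\}$ in $(\mathbb Z/p\mathbb Z)^\times/\{\pm1\}$. The representatives $q=1,\ldots,(p-1)/2$ hit each such class exactly once, and multiplication by $w$ permutes these classes. Therefore
\[
\prod_{q=1}^{(p-1)/2}\bigl|\cos(\pi q w/p)\bigr|=\prod_{q=1}^{(p-1)/2}\bigl|\cos(\pi q/p)\bigr|,
\]
and $M_{\mathbf q_p}(w)=M_{\mathbf q_p}(1)$ for every mixed $w$. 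This bookkeeping is the only place where primality enters; for composite odd $n$ the permutation argument fails, since $\cos$ can vanish or classes can collapse. I expect this step to be the main point requiring care, specifically the verification that $\{\pm qw\}$ runs over all classes without repetition.

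\textbf{Step 2 (evaluating the product).} It remains to show $\prod_{q=1}^{(p-1)/2}\cos(\pi q/p)=2^{-(p-1)/2}$ for odd $p$. Use $\cos x=\sin 2x/(2\sin x)$. The numerator becomes $\prod_{q=1}^{(p-1)/2}\sin(2\pi q/p)$, and the angles $2q$ for $q\le (p-1)/2$ run over the even integers in $[2,p-1]$. For each odd $k$ in the same range, $\sin(\pi k/p)=\sin(\pi(p-k)/p)$ with $p-k$ even. Hence the multiset $\{\sin(2\pi q/p)\}$ coincides with $\{\sin(\pi q/p)\}_{q=1}^{(p-1)/2}$, the numerator and denominator sine products cancel, and the factor $2^{-(p-1)/2}$ is left. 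An alternative route is the identity $\prod_{k=1}^{p-1}|2\cos(\pi k/p)|=|\Phi|$-type evaluations via $\prod_{k=1}^{p-1}(1+\zeta^k)=1$ for $\zeta=e^{2\pi i/p}$, which also gives $\prod_{k=1}^{p-1}|2\cos(\pi k/p)|=1$, and then halving by symmetry. We would use the elementary sine argument.

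\textbf{Step 3 (conclusion).} Squaring gives $M_{\mathbf q_p}(w)=2^{-(p-1)}$ for all $1\le w\le p-1$, which is equation~\eqref{eq:primeuniform}. As a sanity check, $p=3$ with $\mathbf q=(1)$ gives $\cos^2(\pi/3)=1/4=2^{-2}$, and $p=5$ gives $\cos^2(\pi/5)\cos^2(2\pi/5)=1/16$ by the golden-ratio values.
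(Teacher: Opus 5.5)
Your proof is correct and follows the paper's argument. The paper gets $w$-independence the same way you do, because multiplication by the unit $w$ permutes the classes $\{\pm q\}$ of nonzero residues modulo $p$ and $\cos^2$ is invariant under $x\mapsto p-x$. It then evaluates $\prod_{q=1}^{(p-1)/2}\cos(\pi q/p)=2^{-(p-1)/2}$ through $\sin 2x=2\sin x\cos x$ and the doubling permutation, as sketched in its appendix.

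One side remark is inaccurate, though harmless to the proof. For odd $n$ the factor $\cos(\pi k/n)$ never vanishes, so for composite odd $n$ the argument fails only because some mixed $w$ are not invertible modulo $n$.
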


\begin{proof}
Because $p$ is prime, multiplication by any nonzero $w$ permutes the nonzero residues modulo $p$. Since $\cos^2(\pi x/p)=\cos^2[\pi(p-x)/p]$, the set of squared cosine factors obtained from
\[
q=1,\ldots,\frac{p-1}{2}
\]
is independent of $w$. Therefore
\[
M_{\mathbf q_p}(w)
=
\prod_{q=1}^{(p-1)/2}
\cos^2\left(\frac{\pi q}{p}\right).
\]
The standard product identity
\[
\prod_{q=1}^{(p-1)/2}
\cos\left(\frac{\pi q}{p}\right)
=
2^{-(p-1)/2}
\]
then yields equation~\eqref{eq:primeuniform}.
\end{proof}

\subsection{Repeated prime words}

Repeat the complete prime word $r$ times with fresh Bell pairs. A mixed configuration then imitates all $r$ words with probability
\begin{equation}
M_{p,r}=2^{-r(p-1)}.
\label{eq:primeMr}
\end{equation}
Since there are $2^p-2$ mixed input strings and two unanimous strings,
\begin{equation}
\boxed{
\epsilon_{p,r}
=
\frac{(2^p-2)2^{-r(p-1)}}
{2+(2^p-2)2^{-r(p-1)}}.
}
\label{eq:primeeps}
\end{equation}
The total number of Bell trials is
\begin{equation}
m_{p,r}=r\frac{p-1}{2}.
\label{eq:primepulses}
\end{equation}

\begin{corollary}[The three-user case]
For $p=3$, the prime word contains only $q=1$, so one prime word is one $\alpha=\pi/6$ Bell trial. Equation~\eqref{eq:primeeps} reduces to $\epsilon_{3,r}=3/(4^r+3)$. Thus the three-user construction is ordinary same-angle repetition, while larger primes admit angle-diverse balanced words.
\end{corollary}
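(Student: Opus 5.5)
The corollary follows by specializing the odd-prime construction to $p=3$; there are three things to check.

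First, the word content. Substituting $p=3$ into the word $\mathbf q_p=(1,2,\ldots,(p-1)/2)$ gives $\mathbf q_3=(1)$, a single position with $q=1$. By the angle schedule in the Definition, this position has $\alpha_1=q\pi/(2n)=\pi/6$. So one prime word is exactly one Bell trial at the smallest deterministic angle $\alpha_3$, consistent with $m_{p,r}=r(p-1)/2=r$.

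Second, the $\epsilon$ formula. I would substitute $p=3$ into the repeated-prime expression for $\epsilon_{p,r}$. The number of mixed strings is $2^3-2=6$, and the imitation factor is $2^{-r(p-1)}=2^{-2r}=4^{-r}$. This gives
\[
\epsilon_{3,r}=\frac{6\cdot4^{-r}}{2+6\cdot4^{-r}}.
\]
Multiplying numerator and denominator by $4^r/2$ yields $3/(4^r+3)$. As a sanity check on the uniform factor, Theorem~\ref{thm:primeword} predicts $M_{\mathbf q_3}(w)=1/4$, and directly $\cos^2(\pi/3)=\cos^2(2\pi/3)=1/4$ for $w=1,2$, so the two agree.

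Third, the interpretive claims. Since the $r$-fold repeated word is $(1,1,\ldots,1)$, it coincides with the same-angle repetition block of Mode A, and $M_{p,r}=\cos^{2r}(\pi w/3)$ matches the repetition formula for $M_{\rm rep}^{(L)}$ with $L=r$. For $p\ge5$, the word $\mathbf q_p$ contains both $q=1$ and $q=2$, so it is angle-diverse, while Theorem~\ref{thm:primeword} still guarantees that it is balanced.

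All three steps are direct substitutions; the only care needed is the arithmetic simplification of $\epsilon_{3,r}$, which is routine.
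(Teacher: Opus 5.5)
Your proposal is correct and follows the same route as the paper. The paper gives no separate proof: the corollary is a direct specialization of the odd-prime construction, obtained by observing that $\mathbf q_3=(1)$ is a single trial at $\alpha=\pi/6$ and substituting $p=3$ into equation~\eqref{eq:primeeps}, which gives $6\cdot4^{-r}/(2+6\cdot4^{-r})=3/(4^r+3)$. Your additional checks, $\cos^2(\pi/3)=\cos^2(2\pi/3)=1/4$ and the identification of the repeated word with the Mode~A repetition block, are consistent with the paper.
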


\subsection{Worked example: five users}

For $p=5$,
\begin{equation}
\mathbf q_5=(1,2),
\qquad
\boldsymbol{\alpha}
=
\left(\frac{\pi}{10},\frac{\pi}{5}\right),
\end{equation}
and the unanimous signature is again $(C,N)$. For $w=1$,
\[
M_{\mathbf q_5}(1)
=
\cos^2\left(\frac{\pi}{5}\right)
\cos^2\left(\frac{2\pi}{5}\right)
=
\frac1{16}.
\]
For $w=2$, the two factors exchange values, so the product remains $1/16$. By symmetry the same holds for $w=3,4$.

The importance of changing the second angle is visible at a fixed physical pulse budget. Table~\ref{tab:n5budget} compares same-angle repetition with the balanced codeword.

\begin{table}[H]
\centering
\caption{Five-user comparison at equal Bell-trial budgets. The repetition rows use only $q=1$, the codeword rows repeat the complete designed word $(1,2)$.}
\label{tab:n5budget}
\small
\begin{tabularx}{\linewidth}{P{2.0cm} P{2.3cm} P{2.25cm} P{2.4cm} Y}
\toprule
Trials & Angle word & Unanimous signature & Worst fixed mixed mimic & Mixed fraction among accepted blocks \\
\midrule
2 & $(1,1)$ & $(C,C)$ & $0.42838$ & $0.69070$ \\
2 & $(1,2)$ & $(C,N)$ & $1/16=0.0625$ & $15/31\approx0.48387$ \\
4 & $(1,1,1,1)$ & $(C,C,C,C)$ & $0.18351$ & $0.47873$ \\
4 & $(1,2,1,2)$ & $(C,N,C,N)$ & $1/256\approx0.00391$ & $15/271\approx0.05535$ \\
\bottomrule
\end{tabularx}
\end{table}

Two lessons follow. First, the two-pulse codeword $(1,2)$ is substantially stronger than two repetitions $(1,1)$ even though both consume two Bell pairs. Second, four identical pulses still leave nearly half of accepted blocks mixed under the independent-bit prior, whereas repeating the two-angle codeword once more reduces that posterior contamination below $6\%$. The gain comes from the diversity of the angular questions and is then amplified by repetition of the \emph{whole codeword}.

The table also shows why fixed-pattern imitation and aggregate posterior error must be reported separately. For five users, one $(1,2)$ word gives $\epsilon_{5,1}=15/31\approx0.4839$, while two complete words give $\epsilon_{5,2}=15/271\approx0.05535$. A third complete word, using six Bell trials, gives $\epsilon_{5,3}\approx0.00365$. These values follow directly from equation~\eqref{eq:primeeps}.

The general prime-word scaling is summarized in Figure~\ref{fig:primeamp} and Table~\ref{tab:prime}. They show why the complete codeword, rather than an individual pulse, is the natural unit of statistical amplification.

\begin{figure}[H]
\centering
\includegraphics[width=0.78\linewidth]{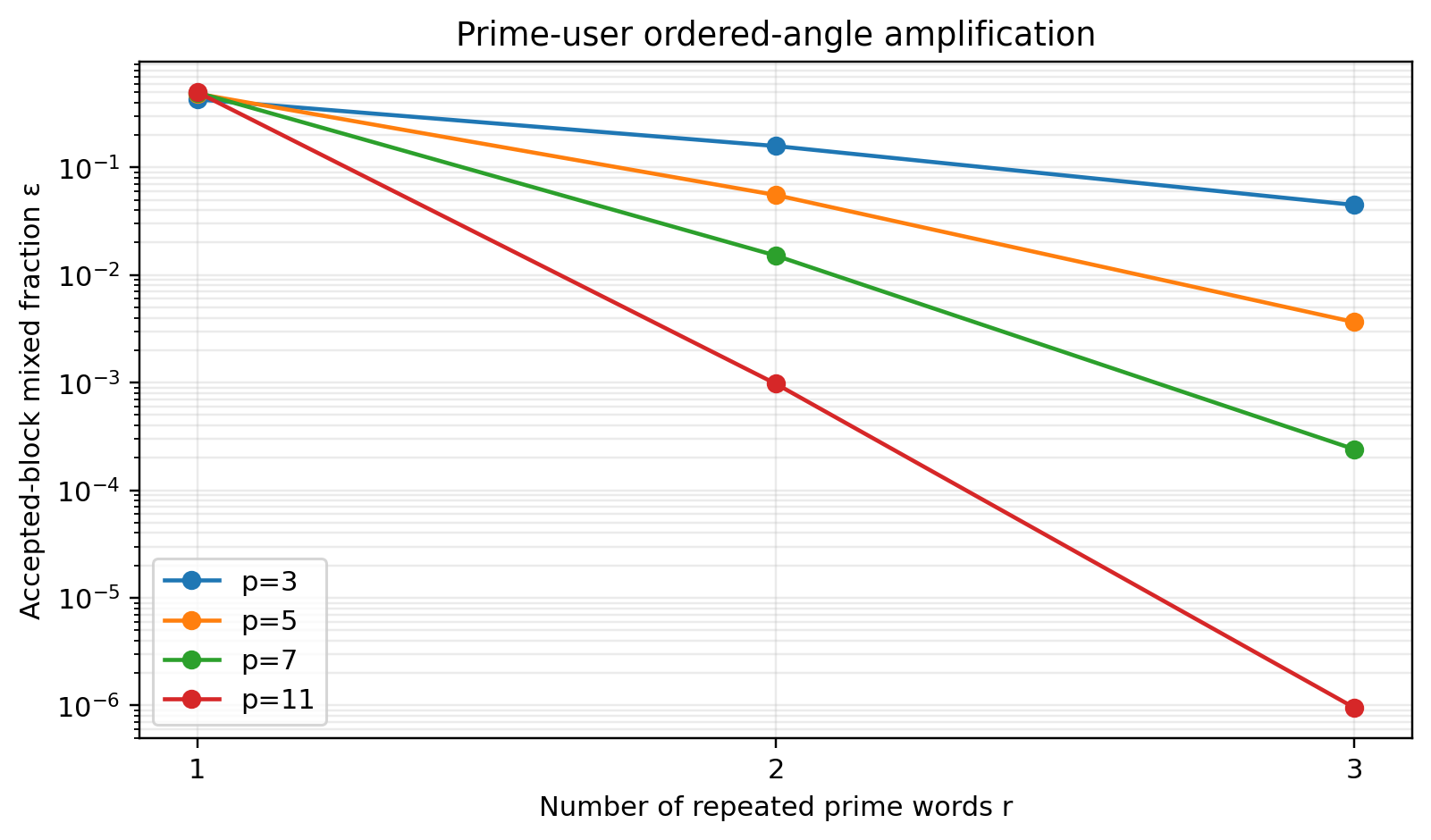}
\caption{Accepted-block mixed fraction for the balanced odd-prime word. Larger primes have more mixed input strings, so one word alone leaves an aggregate error near one half. Repeating the complete word rapidly suppresses the contamination.}
\label{fig:primeamp}
\end{figure}

\begin{table}[H]
\centering
\caption{Balanced prime-word examples. The final column is the aggregate mixed fraction among accepted blocks under independent unbiased input bits.}
\label{tab:prime}
\small
\begin{tabular}{cccccc}
\toprule
$p$ & word length & repetitions $r$ & total trials & fixed mixed mimic & $\epsilon_{p,r}$ \\
\midrule
3 & 1 & 3 & 3 & $1/64$ & $0.04478$ \\
5 & 2 & 2 & 4 & $1/256$ & $0.05535$ \\
5 & 2 & 3 & 6 & $1/4096$ & $0.00365$ \\
7 & 3 & 2 & 6 & $1/4096$ & $0.01515$ \\
11 & 5 & 2 & 10 & $1/2^{20}$ & $0.000975$ \\
\bottomrule
\end{tabular}
\end{table}

\section{Composite user counts with odd factors}
\label{sec:composite}

The impossibility direction of the exactness theorem identifies the obstruction for every $n$ with an odd factor. Write $n=2^r d$ with odd $d>1$. The mixed weight $w=2^r$ gives $qw/n=q/d$ for every deterministic-unanimity coefficient $q$. An odd denominator cannot make this ratio half-integer, so no allowed position can annihilate that weight exactly. The obstruction is therefore arithmetic rather than a failure to search a sufficiently long word.

Useful finite-budget words still exist. For example, at $n=6$ the short schedule $\mathbf q=(1,2,3)$ eliminates several Hamming classes exactly and suppresses the unavoidable classes statistically. For composite $n$ with odd factors, the natural design task is therefore the finite-budget minimax problem in equation~\eqref{eq:minimax}, or its prior-weighted analogue based on $\epsilon(\mathbf q)$, rather than a search for a nonexistent finite perfect word.

\section{Resource accounting and the meaning of gain}

Multiple pulses improve the \emph{quality of the relation test}. Angle diversity can improve it much more rapidly than equal-angle repetition. Neither mechanism creates additional prior probability that independently chosen user bits are all equal. Under independent unbiased choices,
\begin{equation}
\boxed{
P(\mathrm{unanimity})=\frac{2}{2^n}=2^{1-n}.
}
\label{eq:priorU}
\end{equation}
This combinatorial factor is untouched by the pulse word.

Consequently, three distinct rates should be separated:
\begin{enumerate}[leftmargin=2em]
\item \textbf{relation discrimination:} how small $M_{\mathbf q}(w)$ or $\epsilon(\mathbf q)$ becomes,
\item \textbf{accepted logical blocks per input:} $P_{\rm acc}(\mathbf q)$,
\item \textbf{true common bits per physical Bell trial:} the prior true-unanimity contribution divided by the number of physical trials, before optical loss and security processing.
\end{enumerate}

For a fixed $m$-pulse word with exact discrimination,
\begin{equation}
Y_{\rm U,pulse}^{\rm alg}
=
\frac{2^{1-n}}{m}.
\label{eq:trueyield}
\end{equation}
This is an algebraic utilization factor, not a secret-key rate.

For a fixed-length implementation, let $p_{{\rm valid},j}$ be the honest probability that trial $j$ produces a verified two-photon $N/C$ record. The true accepted-block probability of an exact word becomes
\begin{equation}
P_{\rm true,acc}^{\rm phys}
=2^{1-n}\prod_{j=1}^{m}p_{{\rm valid},j}.
\label{eq:physical-true-accept}
\end{equation}
This factorization assumes setting-independent validity events. Adversarial or setting-dependent loss requires a separate analysis.

\subsection{Equal-budget interpretation}

The verification gain is already visible in Table~\ref{tab:pulsegain} and Figure~\ref{fig:pulsebudget}. At the same worst-case discrimination target, the exact dyadic word uses three trials rather than 30 for $n=8$, and four rather than 119 for $n=16$. This is a gain in relation verification, not in the prior probability that independently chosen inputs are unanimous.

\subsection{Sequential early abort}

Because the unanimous signature is known in advance, an input block can be rejected as soon as one position fails to match. This can reduce the mean number of valid Bell trials spent on mixed inputs. It does not change the full-word discrimination law and it is not privacy-neutral, because the stopping position can reveal additional Hamming-weight-class information to Alice.

For this reason, early abort is treated only as an optional resource optimization. The full enumeration of expected valid-trial and physical-attempt costs is given in Appendix~\ref{sec:earlyabort-app}. Applications that prioritize transcript uniformity should execute the full word and disclose only the final pass/fail decision, while recognizing that Alice still holds the internal per-position record.

\section{Protocol specification}
\label{sec:protocol}

One logical execution proceeds as follows.
\begin{enumerate}[leftmargin=2em]
\item \textbf{Input selection.} Each user $B_i$ samples or receives a private input bit $b_i$ and maps it to $s_i=(-1)^{b_i}$. The same sign is retained across the complete word.
\item \textbf{Public word.} Alice and the users agree on $\mathbf q=(q_1,\ldots,q_m)$ and $\alpha_j=q_j\pi/(2n)$.
\item \textbf{Fresh Bell trial.} At position $j$, Alice samples a private Bell label $r_j$, prepares $\ket{\beta_{r_j}}$, retains one subsystem, and sends the traveler through $B_1,\ldots,B_n$.
\item \textbf{Serial transformation.} User $B_i$ applies $R(s_i\alpha_j)$ and forwards the traveler. The same transformation hardware is used at every word position, with only the public magnitude changed.
\item \textbf{Binary readout.} Alice compares the returned Bell pair with her private reference and records $N$, $C$, or the erasure symbol $\varnothing$. An erasure is never reinterpreted as a binary outcome.
\item \textbf{Relation decision.} A valid fixed-length word passes only if its complete $N/C$ record equals the unanimous signature in equation~\eqref{eq:signature}. Early rejection is optional. It saves trials but reveals the first failing position to Alice.
\item \textbf{Interpretation.} For an exact power-of-two word in the ideal model, pass is equivalent to $b_1=\cdots=b_n$. For a statistical word, pass has the residual mixed fraction $\epsilon(\mathbf q)$ and must not be treated as exact unanimity without an additional error criterion.
\item \textbf{Optional raw-key use.} Only after an exact relation pass, or after a separately specified statistical confidence test, may each user retain its own $b_i$ as a raw common-key candidate. This optional interpretation does not constitute a conference-key security proof.
\end{enumerate}

\section{Operational interpretation and optional conference-key use}
\label{sec:agreement}

The ordered word is first a relation primitive. Let $\mathbf b=(b_1,\ldots,b_n)\in\{0,1\}^n$ and define
\begin{equation}
u(\mathbf b)=
\begin{cases}
1,&b_1=b_2=\cdots=b_n,\\
0,&\text{otherwise}.
\end{cases}
\label{eq:unanimity-predicate}
\end{equation}
For an exact power-of-two word, ideal acceptance is equivalent to $u(\mathbf b)=1$. For statistical words the same decision has the residual mixed fraction quantified by equation~\eqref{eq:epsq}. This relation layer can then support two different application interpretations without changing the quantum hardware.

As an optional \emph{conference-key-candidate} use, each user retains its own local $b_i$ after an exact word passes. Every honest participant then knows that all local values coincide and can retain that bit as one raw common candidate. This is not yet a CKA security theorem: parameter estimation, authenticated conference reconciliation, privacy amplification, malicious-user fairness, and finite-size analysis remain additional layers.

In \emph{value-private unanimity mode}, Alice announces only whether the complete signature passed. User $B_i$ combines that public relation result with its own bit and, upon exact acceptance, infers
\begin{equation}
b_1=b_2=\cdots=b_n=b_i.
\label{eq:user-inference}
\end{equation}
Alice does not learn the common value from the final modeled channel because unanimous $0^n$ and $1^n$ are channel-identical, as shown below. The claim is deliberately narrower than general quantum private comparison: rejected outcome words may still reveal aggregate Hamming-weight information.

For four users, the exact word $\mathbf q=(1,2)$ has unanimous signature $(C,N)$. The two inputs $0000$ and $1111$ therefore produce the same pass result at Alice, whereas every $3+1$ or $2+2$ split is rejected in the ideal model. Each participant nevertheless knows which unanimous branch occurred because it knows its own input. This example is sufficient to illustrate the functionality. The primitive is not a voting, majority-consensus, or Byzantine-agreement protocol.

\section{Privacy properties and security boundaries}

\subsection{Common-sign final-channel identity}

The ordered-word construction has an exact common-sign privacy property that extends across the full codeword family.

\begin{proposition}[Final-station indistinguishability of unanimous bits]
For every integer word $\mathbf q$, let $T_1,\ldots,T_m$ denote the single-qubit travelers used in its $m$ trials, each confined to the admitted optical mode, and let $R$ be an arbitrary ancilla retained by Alice. The complete channel induced on an arbitrary joint state $\rho_{R T_1\cdots T_m}$ by unanimous $0^n$ and unanimous $1^n$ input blocks is identical, provided Alice has no access to a traveler between the user operations of its trial.
\end{proposition}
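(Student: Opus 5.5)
The plan is to reduce the statement to an identity of unitaries on each traveler and then lift it to arbitrary joint inputs by tensoring with identities. Because each traveler $T_j$ is confined to the admitted single-qubit mode, the physical action of the user stations in trial $j$ is exactly the serial product $U_{\rm eff}^{(j)}=R(s_n\alpha_j)\cdots R(s_1\alpha_j)$ acting on $T_j$ alone. By equation~\eqref{eq:generic-angle-sum} this equals $R(\pm n\alpha_j)$ for unanimous $0^n$ (all $s_i=+1$) or $1^n$ (all $s_i=-1$), respectively. Since $\alpha_j=q_j\pi/(2n)$, we get $n\alpha_j=q_j\pi/2$. The proposition on deterministic-unanimity pulses then states that $R(q_j\pi/2)$ and $R(-q_j\pi/2)$ coincide up to a global phase. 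Concretely, for even $q_j$ both equal $(-1)^{q_j/2}I$. For odd $q_j$ they equal $\mp i(-1)^{(q_j-1)/2}Y$, which differ by the phase $-1$. Hence there is a scalar $\phi_j$ with $|\phi_j|=1$ such that $U_{\rm eff}^{(j)}(1^n)=\phi_j\,U_{\rm eff}^{(j)}(0^n)$.

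Next I will assemble the full block channel. With the users' inputs fixed across the word and no interleaved access by Alice, the overall operation on $R T_1\cdots T_m$ is the product unitary $W_b=I_R\otimes\bigotimes_{j=1}^m U_{\rm eff}^{(j)}(b)$. The time ordering between trials is irrelevant because the factors act on distinct systems, and Alice's fresh Bell preparations and retained halves can be absorbed into $R$ together with any other ancilla. Therefore $W_{1^n}=\bigl(\prod_j\phi_j\bigr)W_{0^n}$, and the induced maps satisfy $W_{1^n}\rho W_{1^n}^\dagger=W_{0^n}\rho W_{0^n}^\dagger$ for every joint state $\rho_{RT_1\cdots T_m}$. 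The phases cancel in the conjugation, so the channels are identical as completely positive maps. This already covers entangled and correlated inputs across trials and any later measurement Alice chooses, including the binary instrument of equation~\eqref{eq:binary}.

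The main obstacle is conceptual rather than computational. I must justify that the user action really is a unitary on the admitted mode only, with no extra output system that could carry the sign. I must also justify that the global phase is genuinely unobservable, even though Alice keeps coherent ancillas. For the first point I will invoke the stated hypotheses: single-mode confinement and no access between user operations. Together these exclude intermediate probing, so the composite is exactly the serial product of $R$'s with no side channel. For the second point, the phase is a scalar multiple of the whole joint unitary, not a relative phase between branches of a superposition, because the input is a fixed classical block. I will remark that a superposition of $0^n$ and $1^n$ inputs is outside the model. With that remark the proof closes.
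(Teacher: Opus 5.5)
Your proposal is correct and follows essentially the same route as the paper: you reduce each trial to $R(\pm q_j\pi/2)$, which agree up to a global phase (the same multiple of $I$ for even $q_j$, opposite phases on $Y$ for odd $q_j$), and then lift this to the tensor-product channel on $RT_1\cdots T_m$. Your explicit bookkeeping $W_{1^n}=(\prod_j\phi_j)W_{0^n}$, with the phase cancelling under conjugation, simply spells out the final step that the paper states in one line.
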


\begin{proof}
In trial $j$, unanimity gives
\[
U_{0^n}^{(j)}=R(q_j\pi/2),
\qquad
U_{1^n}^{(j)}=R(-q_j\pi/2).
\]
If $q_j$ is even, both are the same multiple of $I$. If $q_j$ is odd, they are opposite global phases multiplying $Y$. Thus
\[
U_{0^n}^{(j)}\doteqph U_{1^n}^{(j)}
\]
for every trial. Their tensor-product channels over the complete word are therefore identical.
\end{proof}

This is stronger than assuming that Alice follows the coarse-grained comparator. No final measurement on the modeled returned qubits and retained ancilla can distinguish the two unanimous values. The scope is nevertheless exact: the result does not cover multiphoton, higher-dimensional, out-of-mode, Trojan-horse, or side-channel probes, and it does not cover access between user operations.

\subsection{Transcript information for nonunanimous inputs}
\label{sec:transcript}

The proposition protects the value of an accepted unanimous block. It is not a full input-privacy theorem. For a mixed weight $w$, the position probabilities $\mu_j(w)$ depend on $w$. The complete $N/C$ record can therefore distinguish some Hamming-weight classes statistically, and an early-abort implementation additionally reveals the first position at which the unanimous signature failed. Neither record identifies which users supplied the negative signs because the same-axis composition depends only on $w$, but it may reveal more than the single predicate bit $u=0$.

If an application requires the mediator to learn only pass/fail for both accepted and rejected blocks, the present receiver and early-abort rule are insufficient. A fixed-length execution removes abort-time leakage but not the information in the full outcome word. Hiding that word would require a separate cryptographic or measurement design. We therefore use ``value-private unanimity'' only for the indistinguishability of $0^n$ and $1^n$.

\subsection{Local maximal mixing}

For honest Bell references, every traveling subsystem is locally $I/2$ and remains so under every user rotation, see equations~\eqref{eq:maxmixed} and~\eqref{eq:bell-local-invariance}. This restricts standalone single-point information available from the traveler while allowing globally distinct correlations after it is reunited with Alice's retained subsystem.

Section~\ref{sec:bell-boundary} shows why this property should not be conflated with the additive operator law. A pure qubit with $r_y=0$ can reproduce the same ideal rotational $\cos^2\Theta$ statistics, and private reference randomization can make an observer-averaged ensemble maximally mixed. Nevertheless, each realized pure carrier remains rank one. Bell local maximal mixing is therefore a resource-level property of the chosen implementation, not the mathematical origin of the ordered-angle theorem. It also does not, by itself, prove security against a coherent probe that interacts with the traveler at multiple route segments.

\subsection{Explicit boundaries}

This work does not establish:
\begin{itemize}[leftmargin=2em]
\item composable secrecy against arbitrary coherent or collective attacks,
\item fairness or security against malicious users that choose correlated settings, intentional angle deviations, or selective aborts (Section~\ref{sec:robustness} treats bounded calibration and readout error, not adversarial deviations),
\item security against Alice or infrastructure that probes the traveler between user transformations,
\item privacy of the nonunanimous Hamming-weight class from the $N/C$ or early-abort transcript,
\item multiphoton, higher-dimensional, out-of-mode, Trojan-horse, wavelength, timing, detector, or loss-manipulation security,
\item finite-key bounds for conference reconciliation and privacy amplification,
\item device independence or measurement-device independence.
\end{itemize}
Practical deployment therefore requires optical isolation, filtering, monitoring, and access control against injected probes, as in practical QKD countermeasure analyses~\cite{gisin2006trojan,jain2015trojan}.

\section{Photonic binary Bell readout}
\label{sec:photonic}

The physical comparator is inherited from the two-user rotational Bell construction rather than introduced here~\cite{lizama2026twouser}. Alice need not resolve all four Bell labels. With the convention of equation~\eqref{eq:bellref}, she applies
\begin{equation}
Q_{r_j}=P_{r_j\oplus 11},
\end{equation}
which maps $\ket{\beta_{r_j}}$ to the singlet $\ket{\Psi^-}$, up to phase, and maps its orthogonal Bell complement to the triplet subspace. The readout becomes
\begin{equation}
\Pi_s=\proj{\Psi^-},
\qquad
\Pi_t=I_4-\Pi_s.
\label{eq:singtrip}
\end{equation}

For indistinguishable photons incident on a balanced beam splitter, the antisymmetric singlet antibunches into separate outputs while symmetric triplet components bunch~\cite{hong1987hom,mitchell2003filter}. Photon-number-resolving detection, or an equivalent splitter-tree arrangement, is useful to distinguish genuine two-photon bunching from single-photon loss. The required operation is therefore a $1+3$ symmetry filter rather than a complete $1+1+1+1$ Bell analyzer. Complete linear-optical Bell discrimination has its own efficiency limits~\cite{calsamiglia2001bsm}.

\begin{figure}[H]
\centering
\begin{tikzpicture}[
>=Latex,
node distance=0.75cm and 0.95cm,
box/.style={draw,rounded corners,minimum width=2.25cm,minimum height=0.76cm,align=center,font=\small},
arr/.style={->,thick}
]
\node[box] (ret) {returned Bell pair\\private $r_j$};
\node[box,right=of ret] (norm) {$Q_{r_j}$\\reference $\to\Psi^-$};
\node[box,right=of norm] (bs) {balanced 50:50\\beam splitter};
\node[box,right=of bs] (det) {two-photon\\occupancy};
\node[box,above=0.85cm of det] (sing) {separate outputs\\reference $N$};
\node[box,below=0.85cm of det] (trip) {same output\\complement $C$};
\node[box,below=0.55cm of trip] (null) {loss or unverified\\occupancy $\varnothing$};
\draw[arr] (ret)--(norm);
\draw[arr] (norm)--(bs);
\draw[arr] (bs)--(det);
\draw[arr] (det)--(sing);
\draw[arr] (det)--(trip);
\draw[arr,dashed] (det.east) -- ++(0.45,0) |- (null.east);
\end{tikzpicture}
\caption{Inherited binary Bell comparator reused at every word position. Verified separate-output and same-output two-photon records give $N$ and $C$, loss or unverified occupancy gives $\varnothing$. Only the public rotation magnitude changes between trials.}
\label{fig:receiver}
\end{figure}

Loss or invalid two-photon occupancy is a third erasure outcome and is not identified with $N$ or $C$. The algebraic formulas are conditioned on valid binary classifications. Setting-dependent or adversarially controlled loss remains outside the security claim.

\section{Discussion}

The main result is a code-design statement rather than a new use of Bell states or rotations by themselves. Equal-angle repetition raises one likelihood to a higher power. Ordered-angle coding changes the angular question from position to position. For $n=2^r$, the dyadic coefficients resolve the $2$-adic valuation classes of all mixed Hamming weights, and the minimality theorem shows that $\log_2n$ positions are necessary and sufficient within the deterministic-unanimity same-axis pulse family. The restriction to this family is essential. We do not claim an impossibility result for arbitrary real-angle schedules, adaptive measurements, or higher-dimensional resources.

The robustness analysis connects that exact theorem to device requirements. A bounded additive per-user angle offset lifts an ideal rejecting zero to $\sin^2\Delta_j$, and the conservative worst-case accumulation scales as $n\delta_{\max}$. The mathematical gain therefore survives small calibration and readout errors, but it is accompanied by a clear $1/n$ tightening of the per-user angular tolerance at fixed false-acceptance target. This is a directly testable requirement for a polarization-rotation implementation.

Bell entanglement has a distinct role. The same additive rotation law and ideal $\cos^2\Theta$ statistics can be reproduced with suitable pure-qubit references. The Bell implementation is retained because the traveler is locally maximally mixed in every honest realization, while the relation is recovered only through the global correlation with Alice's retained reference. This improves the resource-level information geometry but does not close the security problem. Current QCKA research explicitly treats finite-key coherent attacks, restricted users, detector trust, and source trust~\cite{krawec2025cad,barreiro2025sqcka,yang2024mdiqcka,hua2025siqcka}. A comparable security layer for serial Loop-Back access remains future work.

Finally, ordered angles improve verification, not the prior probability of spontaneous unanimity. For independent unbiased user bits that probability remains $2^{1-n}$. The optional conference-key interpretation should therefore be read as raw-candidate generation after an exact relation pass, not as a high-rate or composably secure QCKA protocol. The most immediate extensions are optimized finite-budget words for composite $n$, experimental characterization of angle and visibility errors, and a security proof that includes intermediate probing, adversarial loss, reconciliation, and privacy amplification.

\section{Conclusion}

Ordered-angle coding converts serial same-axis rotations into a position-dependent multiuser relation test. For $\alpha_j=q_j\pi/(2n)$, the mixed-weight imitation law is $M_{\mathbf q}(w)=\prod_j\cos^2(\pi q_jw/n)$. Within this deterministic-unanimity pulse family, a finite perfect word exists exactly when $n$ is a power of two, and the dyadic word $(1,2,4,\ldots,2^{r-1})$ is pulse-minimal with $m_{\min}=r=\log_2n$. Odd-prime user counts instead admit balanced statistical words, while composite counts with odd factors motivate finite-budget optimization.

The ideal zeros produced by an exact word remain controlled under small implementation errors. Per-user additive angle offsets accumulate into $\Delta_j=\sum_i\delta_{ij}$ and lift a rejecting position to $\sin^2\Delta_j$. Combined with a symmetric binary classification model, this gives explicit false-acceptance and true-acceptance bounds and a direct calibration scaling with user count.

The codeword theorem does not rely on entanglement alone. Bell references are used because the traveler remains locally maximally mixed in every honest trial, separating the local carrier from the globally readable relation. An accepted exact word can certify $[b_1=\cdots=b_n]$ and provide a raw conference-key candidate, but composable conference-key security, malicious-user fairness, intermediate active probing, adversarial loss, and finite-key post-processing remain outside the present claim.

\section*{Data and code availability}
All numerical data underlying the figures and tables are generated from the closed-form expressions in the article. The verification script and its output are supplied with the source material. No experimental data are reported.

\section*{Competing interests}
The author declares no competing interests.

\section*{Author contributions}
L.A.L.-P. conceived the protocol family, developed the ordered-angle construction and analytical results, and wrote the manuscript.

\appendix

\section{Derivation of the pulse-match probability}

For trial $j$,
\[
\alpha_j=\frac{q_j\pi}{2n},
\qquad
S=n-2w.
\]
The complement probability is
\[
P(C|w,q_j)
=
\sin^2\left[
\frac{q_j\pi}{2}
-\frac{q_j\pi w}{n}
\right].
\]
If $q_j$ is odd, unanimity expects $C$ and
\[
P(\text{match}|w,q_j)
=
\cos^2\left(\frac{q_j\pi w}{n}\right).
\]
If $q_j$ is even, unanimity expects $N$. In that case
\[
P(C|w,q_j)
=
\sin^2\left(\frac{q_j\pi w}{n}\right),
\]
so
\[
P(N|w,q_j)
=
\cos^2\left(\frac{q_j\pi w}{n}\right).
\]
Thus the same match formula applies for all integer $q_j$.

\section{Trigonometric identity for the prime word}

For odd integer $p$,
\[
\prod_{q=1}^{p-1}
\sin\left(\frac{\pi q}{p}\right)
=
\frac{p}{2^{p-1}}.
\]
Pairing $q$ with $p-q$ gives
\[
\prod_{q=1}^{(p-1)/2}
\sin^2\left(\frac{\pi q}{p}\right)
=
\frac{p}{2^{p-1}}.
\]
A complementary standard identity is
\[
\prod_{q=1}^{(p-1)/2}
\cos\left(\frac{\pi q}{p}\right)
=
2^{-(p-1)/2},
\]
which follows, for example, from
\[
\sin(2x)=2\sin x\cos x
\]
and the permutation of the nonzero residues under multiplication by $2$ modulo an odd prime. Squaring gives the factor $2^{-(p-1)}$ used in Theorem~\ref{thm:primeword}.

\section{General early-abort enumeration}
\label{sec:earlyabort-app}

For a fixed mixed weight $w$, define cumulative survival
\[
R_0(w)=1,\qquad
R_j(w)=\prod_{\ell=1}^{j}\mu_\ell(w).
\]
An input block reaches trial $j$ precisely when the first $j-1$ trials matched the unanimous signature, so
\[
\bar m(w|\mathbf q)
=
\sum_{j=1}^{m}R_{j-1}(w).
\]
For independently uniform input bits, the expected number of valid Bell trials used by an early-abort implementation is
\[
\bar m(\mathbf q)=2^{-n}\left[2m+\sum_{w=1}^{n-1}\binom{n}{w}\,\bar m(w\mid\mathbf q)\right].
\]
The first term accounts for the two unanimous strings, which always reach all $m$ positions. The remaining terms weight each mixed Hamming class by its binomial multiplicity.

\bibliographystyle{unsrtnat}
\bibliography{references}

\end{document}